\definecolor{darkbrown}{rgb}{0.4, 0.26, 0.13}
\newtheorem{theorem}{Theorem}[section]
\newtheorem{definition}[theorem]{Definition}
\newtheorem{corollary}[theorem]{Corollary}
\newtheorem{lemma}[theorem]{Lemma}
\def\asymmetric{0}
\def\symmetric{1}
\def\uniform{0}
\newcommand{\zmax}{{z_{\max}}}
\newcommand{\zmin}{{z_{\min}}}
\newcommand{\tNL}{\theta_{NL}}  
\newcommand{\tNB}{\theta_{NB}}  
\newcommand{\tLB}{\theta_{LB}}  
\begin{document}

\title{How to Charge Lightning:\\The Economics of Bitcoin Transaction Channels\footnote{The authors are in alphabetical order. An earlier version of this paper was presented at the Scaling Bitcoin 2017 workshop: \url{https://stanford2017.scalingbitcoin.org/}.} }
\author{Simina Br\^anzei\thanks{Purdue University, USA. E-mail: \href{mailto:simina.branzei@gmail.com}{simina.branzei@gmail.com}.} \and Erel Segal-Halevi\thanks{Ariel University, Israel. E-mail: \href{mailto:erelsgl@gmail.com}{erelsgl@gmail.com}.} \and Aviv Zohar\thanks{Hebrew University of Jerusalem, Israel. E-mail: \href{mailto:avivz@cs.huji.ac.il}{avivz@cs.huji.ac.il}.}}

\maketitle

\begin{abstract}
Off-chain transaction channels represent one of the leading techniques to scale the transaction throughput in cryptocurrencies. However, the economic effect of transaction channels on the system has not been explored much until now. 

We study the economics of Bitcoin transaction channels, and present a framework for an economic analysis of the lightning network and its effect on transaction fees on the blockchain. Our framework allows us to reason about different patterns of demand for transactions and different topologies of the lightning network, and to derive the resulting fees for transacting both on and off the blockchain. 

Our initial results indicate that while the lightning network does allow for a substantially higher number of transactions to pass through the system, it does not necessarily provide higher fees to miners, and as a result may in fact lead to lower participation in mining within the system.
\end{abstract}

\paragraph{Keywords:} blockchain, lightning network, transaction channels, market equilibrium

\section{Introduction}
A main approach to solve the scalability problem in Bitcoin is to use off-chain transaction channels that allow parties to transfer funds while communicating directly, and only occasionally to settle on the blockchain. The  deployment of SegWit, a solution to transaction malleability (among other benefits) opens the path for better constructions of off-chain transaction channels. While transaction channels themselves are limited to exchanges between pairs of individuals, further developments like the lightning network \cite{PD16} allow to route payments over longer paths and thus can allow the construction of a well connected network of payment channels that can be used to transfer money quickly and with relatively little interaction with the blockchain. For additional discussion of micropayment channels and scalability, see, e.g., \cite{HS17,DW15,Croman2016}.

One of the key unknowns regarding fast payment networks is the economic effect that they will have on the Bitcoin fee market. If the blockchain is used less often, fees to miners are paid less frequently and competition for space in blocks declines. Bitcoin's security depends heavily on having a large amount of computational power invested in solving proof-of-work puzzles, making it hard for attackers to double spend or censor transactions in the currency. As the block reward in Bitcoin declines (halving every four years), the reliance on fees increases and these must suffice to pay for enough mining by honest participants. 

In this work we explore the economics of Bitcoin transaction channels, and in particular the economic equilibrium that results from the introduction of fast-payment networks to the Bitcoin ecosystem. Our main contribution is a theoretical framework in which one can reason about the usage of payment channels and the cost of committing records to the blockchain. We explore different topologies of payment channels and find the market equilibrium that dictates (among other things) the fees that will be collected by miners, the transactions that pass through the lightning network or directly through the blockchain, and the transactions that do not take place (e.g., micro transactions for which the fees are too high on both alternatives). 
\medskip 

\noindent \textbf{Summary of our findings.}
While our findings strongly depend on the assumptions we have made regarding the distribution of payment sizes and willingness to pay fees, we generally find that the revenue obtained by miners can sometimes be lower when lightning networks are deployed compared to when they are not, unless extremely large numbers of participants take part in transacting (there too, results depend on the distributional assumptions). Naturally, the addition of payment channels does indeed result in very high transaction throughputs in the system overall.

The implications for Bitcoin are that 
the revenue from fee payments alone might be insufficient to support the security of the system. Still, we caution that our results should be taken with a grain of salt: we account for very simple topologies but hope our initial exploration will inspire further exploration of similar models. 

\subsection{An overview of our approach}
The first step taken when modeling the effects of different transaction methods is to select a model for the demand for transactions --- a model specifying which participants want to send money, to whom, what amount is transfered, and how much fee the sender is willing to pay to complete the transfer. 

A second aspect that needs to be determined is the topology of payment channels that is set in the system: which pairs of participants choose to establish channels between them, and how much funding is dedicated to each such channel. This is crucial in determining the lifetime of the channel given different use patterns. 

Finally, with this information at hand, we set out to compute the demand for blockchain records. Such demand stems from two main sources: the establishment and settlement of existing lightning channels, and direct transfers that occur on the blockchain. Given that the block size is limited, the daily supply of new transaction records is fixed, and we are thus able to compute the market equilibrium fee for records.

\medskip 

\noindent \textbf{The mechanics of transaction channels.}
Transaction channels are typically established by locking funds using a single blockchain transaction. The channel state is then updated by the two participants by exchanging transactions that update the division of funds from the locked amount. These transactions are not typically transmitted to the blockchain. Each update represents a new division of funds and usually only the last transaction is committed to finalize the transfer. The transactions exchanged by the two participants on each channel are set up so that if one of the participants (say Alice) disappears or tries to take funds that are not hers (e.g., by placing a transaction that represents an old state on the blockchain), the other participant (say Bob) can recover his funds or even punish Alice by taking all the funds in the channel. For the purpose of this work, we assume all channels are established and settled cooperatively, as we aim to consider the expected behavior of the market under ``normal'' circumstances. 
	
We additionally assume that channels can be settled and reopened with a single transaction, which may be of larger size in the network.
We assume the parties that transfer money back and forth do so according to a random process, and that they therefore occasionally end up in a state where all funds in the channel are directed to one of the users. In this case, the channel can only be used to transfer money in one direction and the channel must be reset or re-funded to allow flow in the other direction. Clearly, if the amount of liquidity in the channel is high, then this event will occur rarely. It is therefore of paramount importance to establish the typical amount of funding in each channel. Since liquidity that is locked within the channel represents money that is not invested elsewhere, we consider the cost of holding liquidity in the channel as the lost income from interest payments on this sum. Stated differently: we allow participants to borrow as much money as they want to fund their transaction channels, and the cost for such payments is simply the interest rate in the economy. This cost is the de facto limiting factor for the lightning network. 

\medskip 

\noindent \textbf{Models for the demand for transactions.}
We explore two primary models for the demand for transactions. The first is a model in which participants are paired and only transact with their direct partner. While this is not a realistic depiction of the flow of money in an economy, it is in some sense a best-case setting for transaction channels, as no routing of payments is required by the system. The topology of channels in this case is also simple: just create channels between transacting pairs if it is profitable to do so.  In this setting we note there are several variants: one in which transactions occur with equal probability in each direction, and one in which transfers are  asymmetrically biased in one direction. In this work we focus on the symmetric variant. Another axis along which we vary our analysis is regarding the size of payments. We assume in one case that payments come from a uniform distribution, and in another case that they are derived from a power-law distribution (as it  often is in real life data). 

Our second model assumes that all participants may pay each other, and that payments occur between participants that are chosen uniformly at random. Here we focus on an analysis of a payment network that includes a single payment hub. The hub needs to maintain additional liquidity, but allows participants that are connected to it to route payments to everyone else. We find surprisingly that the results bear great similarity to the pairs model (except for extra payments for the additional liquidity). 

\medskip 
\noindent \textbf{Implementation.} The simulation code can be downloaded from: \href{https://github.com/erelsgl/bitcoin-simulations}{https://github.com/\\erelsgl/bitcoin-simulations}.


\subsection{Related Work}
Following the circulation of our paper, there have been follow-up works investigating the lightning network.  \cite{wang2022can} analyze the structure of the lightning network and compare it with the Barabasi–Albert model
 for generating random scale-free graphs via preferential attachment, finding that the lightning network has a different structure (such as different assortativity and  diameter). \cite{ZFDS22} conduct an empirical analyis of the lightning network, focusing on the  betweenness centrality distribution of the routing system.

\cite{BCV20} studies the lightning network as a  percolation process and aims to understand how the distribution of volume and size of transactions impacts the feasability of the system.    \cite{Beres2021Cryptoeconomic} design  a publicly available traffic simulator to empirically study the transaction fees and privacy provisions on the lightning network. 
  \cite{GGS21}  analyze payments on lightning channels that are unidirectional or symmetric bidirectional. Their work identifies conditions for two parties to optimally establish a channel, find explicit formulas for channel costs and obtain the optimal collaterals and savings entailed, deriving the implied reduction in the congestion on the blockchain.

 Lightning channels may be subject to adversarial attacks, which may allow an adversary to discover channel balances, thus threatening the privacy of the users. \cite{NFSD20}  observe that  the lightning network allow users to use gossip and probing mechanisms to learn about possible paths for routing their transactions, which may in turn be exploited by an adversary to learn information about the transactions. In this context, \cite{NFSD20} analyze two types of attacks: a probing attack, where an adversary wants to detect the maximum amount transferable in a given direction on a channel by probing it, and a timing attack, where the adversary discovers how close the destination of a routed payment is.
 \cite{BNT22} analyze probing attacks in the presence of multiple channels between the same pair of users on the lightning network. \cite{Kappos_21} consider multiple types of attacks that an adversary may perform in the lightning network, including balance, path, and payment discovery.

\section{Model}

We analyze the market for \emph{records} on the blockchain. A record is a part of a block, in which a single transaction is recorded. Each record has a market-price $\phi$ [bitcoins-per-record], which is the mining-fee for a blockchain transaction. The market-price $\phi$ is determined as a price in which the \emph{supply} of records equals the \emph{demand} for records. 

The \emph{supply} in our market is quite simple: the bitcoin protocol ensures that the supply of records per day is fixed. We denote this parameter by $\tau$. 
The total revenue of the miners, which is an important factor in the security of bitcoin, will be $\tau\cdot \phi$ [bitcoins-per-day].

The \emph{demand} is driven by the need of users to transfer money to other users. The demand is determined by the following sets of parameters:
\begin{itemize}
\item The number of times that user $i$ wants to transfer money to user $j$ per day is a Poisson random variable with mean value $\lambda_{i,j}$ [transfers-per-day].
 
\item The size of transfers from user $i$ to user $j$ is $z_{i,j}$ [bitcoins-per-transfer]. This models the fact that some users do micro-transfers while others do bulk transfers.
We will sometimes assume that $z_{i,j}$ is drawn from a probability distribution such as uniform or power-law.

We assume that the transfer-size $z_{i,j}$ is constant for each pair, i.e, it is drawn randomly once for each pair and then remains fixed.

\item The utility that user $i$ gains from each transfer to user $j$ is $v_{i,j}$ [bitcoins-per-transfer]. 
We will often assume that the utility is proportional to the transfer size, i.e, $v_{i,j} = \beta z_{i,j}$, where $\beta$ is a constant  in $(0,1)$.

\end{itemize}

Each time user $i$ considers transferring money to user $j$, it compares three options: blockchain transfer, lightning transfer, or no transfer at all. 
The user selects the option with the highest net gain. 
In the case of a blockchain transfer, the net gain is $v_{i,j}$ minus the blockchain fee $\phi$. 
In the case of a lightning transfer, the net gain is $v_{i,j}$ minus the \emph{lightning fee}, which is derived from the cost of maintaining the lightning channel.
In case of no transfer, the net gain is zero. 

The lightning fee is derived from several parameters which determine the cost of using lightning:
\begin{itemize}
\item The number of blockchain records required for a channel-reset transaction, denoted by $a$. Note 
a reset transaction is slightly larger than a standard transaction, so $a$ is a number between $1$ and $2$ [records]. Therefore the cost of resetting a lightning channel is $a\cdot \phi$ [bitcoins].
\item 
The (fixed) interest rate $r$ [per day]. 
A user who wants to use lightning has to lock money in channels, and thus it has to pay an economic cost determined by $r$. This means that in general a user will not want to lock all its money in lightning channels; the user will look for the optimal amount to lock such that the total cost (economic cost plus channel-reset cost) is minimized. 
\end{itemize}
\medskip 

We study several special cases for the transfer matrix $\lambda_{i,j}$:
\begin{itemize}
\item \emph{Pairs}: the users are divided in pairs (e.g. $(1,2), (3,4), \ldots$) All transfers are only inside each pair, i.e, for every $i$, only $\lambda_{2i,2i-1}$ and $\lambda_{2i-1,2i}$ are non-zero. This is in some sense the best case for lightning, since we need a channel only between user $i$ and user $j$. 
\item \emph{Symmetric uniform}: $\lambda_{i,j}=\lambda$ for all $i$ and $j$. 
\item \emph{Asymmetric uniform}: for each pair $i,j$, $\lambda_{i,j}+\lambda_{j,i} = \ell$ and $|\lambda_{i,j}-\lambda_{j,i}|=\Delta$. I.e, the pairs are asymmetric, either user $i$ transfers more money to user $j$ or vice-versa.
\end{itemize}
In general, it is possible that some agents accumulate money endlessly while other agents spend money endlessly; this can be explained by assuming that they do some transfers outside bitcoin. Alternatively, one can assume a special topology in which all nodes have an even degree and have positive net transfer in exactly half their edges.

\smallskip 

We also consider several possible lightning topologies:
\begin{description}
\item[\emph{Pairs}:] each user $2i$ has a channel only with user $2i-1$.
\item[\emph{Star}:] there is a single node (``bank'') which is connected to all users; all transfers go through the bank.
\end{description}

\section{Analysis of a Single Channel}
The basic building-block for our analyses is the analysis of a single channel between two users, Alice and Bob. 
We first analyze the expected channel lifetime for a given channel capacity and distribution of transfers, and then find the optimal channel capacity.

We checked two cases related to the transfer rates: in the symmetric case the transfer-rate from Alice to Bob equals the transfer-rate from Bob to Alice; in the asymmetric case the transfer-rates are different.
\if\asymmetric0
For brevity, this paper presents only the results for the symmetric case; we found the results for the asymmetric case to be qualitatively similar. Thus from now on we assume the channels are symmetric.
\fi

\if\asymmetric1
\subsection{Lifetime of Asymmetric Channels}
We start by analyzing the expected channel lifetime when in each round there is a single unit (coin) sent from Alice to Bob with probability $p$ and from Bob to Alice with probability $1-p$.

\begin{theorem}[Lifetime of Asymmetric Channel] \label{thm:asymmetric_life}
Let Alice and Bob have a channel with $w$ bitcoins, which they use to transfer single bitcoins to each other at a time. Each transfer is made by Alice with probability $p$ and by Bob with probability $1-p$. If $p \in (0,0.5) \cup (0.5,1)$,
then the expected lifetime of the channel from the state where Alice has $m$ bitcoins and Bob has $w - m$ bitcoins is:
\[
X_m = \frac{m}{2 p-1} - \left( \frac{w}{2 p-1} \right) \cdot \frac{1 - \left( \frac{p}{1-p} \right)^m}{1 - \left(\frac{p}{1-p} \right)^w}
\]
\end{theorem}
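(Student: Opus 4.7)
The plan is to set up the problem as a one-dimensional random walk with two absorbing barriers (a gambler's ruin process) and solve the resulting linear recurrence for the expected absorption time. Let $X_m$ denote the expected lifetime starting from the state in which Alice holds $m$ coins. The channel dies exactly when one party's balance hits $0$ or $w$, so $X_0 = X_w = 0$. In every other state, the channel survives one more step and then transitions to $m-1$ with probability $p$ (Alice sent to Bob) or to $m+1$ with probability $1-p$, yielding the first-step recurrence
\[
X_m \;=\; 1 + p\, X_{m-1} + (1-p)\, X_{m+1}, \qquad 1 \le m \le w-1.
\]

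The next step is to solve this non-homogeneous linear difference equation by the standard ``homogeneous plus particular'' decomposition. Rearranging gives $(1-p) X_{m+1} - X_m + p X_{m-1} = -1$; the homogeneous characteristic polynomial $(1-p)r^2 - r + p = 0$ factors as $(r-1)\bigl((1-p)r - p\bigr)$, so the two homogeneous roots are $1$ and $\rho := p/(1-p)$, and these are distinct precisely because $p \ne 1/2$. For a particular solution I would try the ansatz $X_m = c m$, which plugged in gives $c(1-2p) = -1$, hence $c = 1/(2p-1)$. The general solution is therefore
\[
X_m \;=\; A + B\,\rho^m + \frac{m}{2p-1}.
\]

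Finally, I would pin down $A$ and $B$ from the absorbing boundary conditions. The condition $X_0 = 0$ forces $A + B = 0$, and $X_w = 0$ then gives
\[
A\bigl(1 - \rho^w\bigr) \;=\; \frac{w}{2p-1},
\]
which can be solved for $A$ and substituted back to obtain
\[
X_m \;=\; \frac{m}{2p-1} \;-\; \frac{w}{2p-1}\cdot\frac{1-\rho^m}{1-\rho^w},
\]
matching the stated expression. The denominator $1 - \rho^w$ is nonzero because $\rho \ne 1$ when $p \ne 1/2$, so the formula is well-defined throughout the hypothesized range.

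The only real subtlety, and thus the main thing I would want to be careful about, is justifying that the first-step recurrence is valid in the first place: one should argue that $X_m$ is finite (so that conditioning on the first step and subtracting is legal). This is standard for an asymmetric random walk on a finite interval with absorbing endpoints — the walk is absorbed in finite expected time because from any interior state there is a uniformly positive probability of reaching a boundary within $w$ steps — but it is worth a brief remark. Once finiteness is in hand, the rest is algebra with no essential obstacle.
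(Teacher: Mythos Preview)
Your argument is correct and essentially identical to the paper's: both set up the gambler's-ruin recurrence $X_m = 1 + pX_{m-1} + (1-p)X_{m+1}$ with absorbing boundaries, solve it by homogeneous-plus-particular with roots $1,\ \rho=p/(1-p)$ and particular solution $m/(2p-1)$, and fix the constants from $X_0=X_w=0$. One minor slip: the boundary step should read $A(1-\rho^w) = -\tfrac{w}{2p-1}$ (you dropped a sign), though your final formula is correct; your added remark on finiteness of $X_m$ is a welcome point the paper omits.
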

\begin{proof}
Let $X_m$ be the expected lifetime of the channel, in number of transfers, from the state where Alice has $m$ bitcoins and Bob has $w - m$. The channel lasts until one of the players reaches a balance of zero bitcoins.
Then $X_m = 1 + p \cdot X_{m-1} + (1-p) \cdot X_{m+1}$, which gives the recurrence
$$X_{m+1} = \left( \frac{1}{1-p}\right) \cdot X_m - \left(\frac{p}{1-p}\right) \cdot X_{m-1} - \frac{1}{1-p}$$

Note that $X_{0} = 0$, since if Alice has zero in her account and wants to make a transfer, then the channel is reset. Similarly, $X_{w} = 0$ corresponds to the case where Bob has zero in his account.

The associated homogeneous recurrence relation is $X_{m+1}^h = \left( \frac{1}{1-p}\right) \cdot X_m^h - \left(\frac{p}{1-p}\right) \cdot X_{m-1}^h$, with 
characteristic equation $x^2 - \left( \frac{1}{1-p} \right) x + \frac{p}{1-p} = 0$. The roots are $x_1 = \frac{p}{1-p}$ and $x_2 = 1$. 

The general solution has the following form, for some constants $a,b$:
\begin{equation}  \label{eq:general2}
X_m^h = a + b \left( \frac{p}{1-p} \right)^m
\end{equation}

We must additionally find a particular solution. 
We write $X_{m+1} = \left( \frac{1}{1-p}\right) X_m - \left(\frac{p}{1-p}\right)  X_{m-1} + F(m+1)$, where $F(m) = - \frac{1}{1-p}$ for all $m$.
In general, when the non-homogeneous term in such a recurrence is 
$$
F(m) = \left(\sum_{k=0}^{t} b_k \cdot m^k \right)  s^m,
$$
if $s$ is a root with multiplicity $q$ of the characteristic equation, there exists a general solution to the recurrence such that for some constants $p_0 \ldots p_t$:
\begin{equation} \label{eq:general}
X_m^p = m^q  \left( \sum_{k=0}^t p_k \cdot m^k \right)  s^m
\end{equation}

Since $F(m) = -\frac{1}{1-p}$, we obtain $s=1$, $t=0$, $b_0 = \frac{-1}{1-p}$, and $b_i = 0$ for all $i \in \mathbb{N}^*$. Note $s=1$ is a solution of order $1$ to the characteristic equation, thus $q=1$. Replacing $s$ and $q$ in Equation \ref{eq:general}, the particular solution has the form 
$$
X_m^p = m \left( \sum_{k=0}^t p_k \cdot m^k \right),
$$
In this case, the particular solution has the form $X_m^p = um$, for some constant $u$.
Substituting in the recurrence relation, it follows that 
\begin{eqnarray*}
X_{m+1}^p & = & \left( \frac{1}{1-p} \right) X_m^p - \left( \frac{p}{1-p} \right) X_{m-1}^p - \frac{1}{1-p} \iff \\ 
u \cdot (m+1)& = & \left(\frac{1}{1-p} \right) u \cdot m - \left( \frac{p}{1-p} \right) u \cdot (m-1) - \frac{1}{1-p}
\end{eqnarray*}
We obtain $u = \frac{1}{2 p-1}$, which implies our particular solution is 
\begin{equation} \label{eq:particular}
X_m^p =  \frac{m}{2 p-1}
\end{equation}
Adding the homogeneous and non-homogeneous parts (Equations \ref{eq:general2} and \ref{eq:particular}), we get 
$$
X_m = X_m^p + X_m^h = a + b \left( \frac{p}{1-p} \right)^m + \frac{m}{2 p-1}
$$
From the boundary condition $X_{0} = 0$, which is equivalent to the channel closing since Alice has no money left, it follows that 
$$
a + b \left( \frac{p}{1-p} \right)^0 + \frac{0}{2 p-1} = 0 \iff b = - a 
$$
The boundary condition $X_{w} = 0$ gives
$$ 
a - a\left( \frac{p}{1-p}\right)^{w} + \frac{w}{2 p-1} = 0 \iff a = \frac{\frac{w}{2 p-1}}{\left(\frac{p}{1-p} \right)^w - 1} 
$$
It follows that for all $m = 0 \ldots w$, the expected number of transfers made on the channel is:
\begin{align}
\label{eq:lifetime-asymmetric-p}
X_m = \frac{m}{2 p-1} - \left( \frac{w}{2 p-1} \right) \cdot \frac{1 - \left( \frac{p}{1-p} \right)^m}{1 - \left(\frac{p}{1-p} \right)^w}  \,.
\end{align}
This completes the proof for asymmetric channels.
\end{proof}

\begin{corollary}
Suppose there is a lightning channel between Alice and Bob, with a capacity of $w$ bitcoins. Alice and Bob transfer money through the channel, a single coin each time.     The transfers from Alice to Bob are a Poisson process with mean $\lambda_A$ transfers-per-day, and the transfers from Bob to Alice are a Poisson process with mean $\lambda_B $ transfers-per-day.
If $\lambda_B \neq \lambda_A$, then the expected lifetime of the channel from the state where Alice has $m$ bitcoins and Bob has $w - m$ bitcoins, is:

\begin{align*}
\tilde{X}_m = \frac{m}{\lambda_A-\lambda_B} - \frac{w}{\lambda_A-\lambda_B}  \left( \frac{1 - \left( \frac{\lambda_A}{\lambda_B} \right)^m}{1-\left(\frac{\lambda_A}{\lambda_B} \right)^w} \right)
&& \text{[days]}
\end{align*}

\end{corollary}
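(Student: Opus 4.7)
The plan is to reduce the corollary to Theorem~\ref{thm:asymmetric_life} by invoking two standard facts about Poisson processes. Let $N$ be the (random) number of transfers that occur on the channel before one party's balance hits zero.

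First, I would superpose the two streams. The superposition of the Alice-to-Bob and Bob-to-Alice Poisson processes is itself a Poisson process of rate $\lambda_A+\lambda_B$, and, conditional on an arrival occurring, that arrival is an Alice-to-Bob transfer independently with probability $p := \lambda_A/(\lambda_A+\lambda_B)$ and a Bob-to-Alice transfer with probability $1-p = \lambda_B/(\lambda_A+\lambda_B)$. So the sequence of transfer \emph{directions} is exactly the i.i.d.\ Bernoulli process analyzed in the theorem with this value of $p$. Hence $E[N] = X_m$, and the substitutions $2p-1 = (\lambda_A-\lambda_B)/(\lambda_A+\lambda_B)$ and $p/(1-p) = \lambda_A/\lambda_B$ in the theorem's formula yield
\[
E[N] = \frac{m(\lambda_A+\lambda_B)}{\lambda_A-\lambda_B} \;-\; \frac{w(\lambda_A+\lambda_B)}{\lambda_A-\lambda_B}\cdot\frac{1-(\lambda_A/\lambda_B)^m}{1-(\lambda_A/\lambda_B)^w}.
\]

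Second, I would convert this expected transfer count into wall-clock time. The inter-arrival times $T_1,T_2,\ldots$ of the merged process are i.i.d.\ exponentials with mean $1/(\lambda_A+\lambda_B)$, and $N$ is a stopping time for their natural filtration with finite mean (finiteness is immediate from the closed form just obtained). Wald's identity therefore gives $\tilde{X}_m = E\bigl[\sum_{i=1}^{N} T_i\bigr] = E[N]/(\lambda_A+\lambda_B)$, and dividing the displayed expression for $E[N]$ by $\lambda_A+\lambda_B$ produces exactly the formula in the statement.

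The main obstacle is essentially bookkeeping: one must check that the two algebraic substitutions really do match the desired expression, and record the one-line justification that $N$ is a valid stopping time with finite mean so Wald's identity applies. There is no deeper difficulty, since both Poisson superposition/thinning and Wald's identity are textbook facts; the case $\lambda_A=\lambda_B$ (i.e.\ $p=1/2$) is legitimately excluded because the theorem itself requires $p\neq 1/2$, and the corollary likewise assumes $\lambda_A\neq\lambda_B$.
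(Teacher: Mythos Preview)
Your proposal is correct and follows essentially the same route as the paper: superpose the two Poisson streams, apply Theorem~\ref{thm:asymmetric_life} with $p=\lambda_A/(\lambda_A+\lambda_B)$ to obtain the expected number of transfers, and then divide by $\lambda_A+\lambda_B$ to convert to days. Your explicit invocation of Wald's identity for the last step is slightly more careful than the paper's one-line ``divide by the mean number of transfers per day,'' but the underlying argument is identical.
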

\begin{proof}
This follows from a well-known property of Poisson processes: given two such independent processes with means $\lambda_A$ and $\lambda_B$, the probability that the next event will be from the first process is $\frac{\lambda_A}{\lambda_A + \lambda_B}$ and the probability that the next event will be from the second process is $\frac{\lambda_B}{\lambda_A + \lambda_B}$.
Denote $p := \frac{\lambda_A}{\lambda_A + \lambda_B}$. 
Then, $2 p - 1 = \frac{\lambda_A-\lambda_B}{\lambda_A + \lambda_B}$ and $\frac{p}{1-p} = \frac{\lambda_A}{\lambda_B}$.
Then, the channel behaves exactly as in Theorem \ref{thm:asymmetric_life} so its lifetime, in number of transfers, is:
\begin{align*}
X_m =
\frac{m(\lambda_A+\lambda_B)}{\lambda_A-\lambda_B} - \left( \frac{w(\lambda_A+\lambda_B)}{\lambda_A-\lambda_B} \right) \cdot \frac{1 - \left( \frac{\lambda_A}{\lambda_B} \right)^m}{1 - \left(\frac{\lambda_A}{\lambda_B} \right)^w}
&& \text{[transfers]}
\end{align*}
To get the lifetime in number of days, 
we just have to divide the above expression by the mean number of transfers per day, which is $\lambda_A+\lambda_B$. This completes the proof.
\end{proof}

Note that the left-hand term can be seen as the net drift from Alice to Bob, while the right-hand term as stochastic ``noise''.

\begin{figure}
\begin{center}
\includegraphics[scale=0.4]{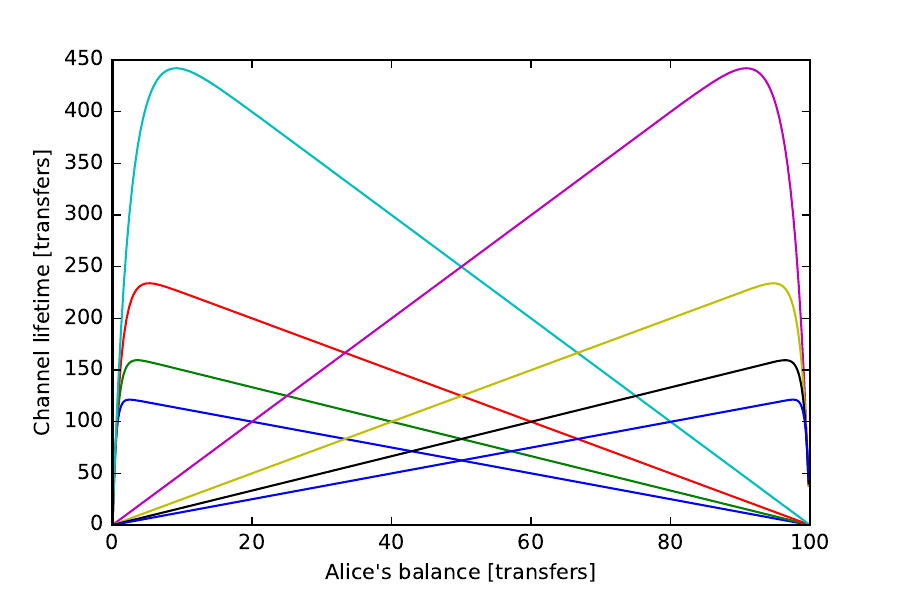}
\end{center}
\caption{
\label{fig:channel-lifetime}
Lifetime of a channel between two users as given by equation (\ref{eq:lifetime-asymmetric-p}).
The channel capacity is $100$ and the lifetime is given in number of transfers.
In the right, from top to bottom, $p$ is $0.4, 0.3, 0.2, 0.1$. In the left, from top to bottom, $p$ is $0.6, 0.7, 0.8, 0.9$.
\footnotesize{
Plot credit: 
Matplotlib \citep{matplotlib}.
}
}
\end{figure}

Figure \ref{fig:channel-lifetime} shows some plots of the channel lifetime for different values of $p$. The curves are essentially linear in the interior of the interval. Therefore, from now on we will work with a linear approximation of the channel lifetime for asymmetric channels.

\begin{definition}[Linear Approximation of Asymmetric Channel Lifetime]
Consider a channel between Alice and Bob with $w$ bitcoins, in which Alice transfers to Bob using a Poisson process with mean $\lambda_A$ and Bob to Alice using a Poisson process with mean $\lambda_B$. If $\lambda_A \neq \lambda_B$, the lifetime of a channel can be approximated by the following linear term:
\begin{equation*}
\tilde{X}_m = \frac{m}{|\lambda_A - \lambda_B|}
\end{equation*}
\end{definition}
\fi  

\if\symmetric1

\subsection{Channel Lifetime}

\begin{theorem}[Lifetime of Symmetric Channel] \label{thm:symmetric_life}
	Let Alice and Bob have a channel with $w$ bitcoins, which they use to transfer single bitcoins to each other at a time. Each transfer consists of one coin sent by Alice with probability $1/2$ and by Bob with  probability $1/2$.
	Then the expected lifetime of the channel from the state where Alice has $m$ bitcoins and Bob has $w - m$ bitcoins is
$X_m = w m - m^2$.
\end{theorem}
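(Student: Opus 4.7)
The plan is to mirror the argument used in the asymmetric case (Theorem \ref{thm:asymmetric_life}) but now set up and solve the recurrence directly at $p=1/2$, since the asymmetric formula becomes a $0/0$ expression in this limit. Let $X_m$ denote the expected lifetime from the state in which Alice has $m$ bitcoins. By conditioning on the next transfer and using the boundary conditions $X_0 = X_w = 0$ (in either extreme the channel must be reset), I would write
\[
X_m \;=\; 1 + \tfrac12 X_{m-1} + \tfrac12 X_{m+1}, \qquad m = 1,\ldots,w-1,
\]
or equivalently $X_{m+1} - 2 X_m + X_{m-1} = -2$.

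Next I would solve this linear recurrence in the same two-step style as the asymmetric proof. The characteristic equation of the homogeneous part is $x^2 - 2x + 1 = 0$, which has $x=1$ as a double root, so the homogeneous solution is $X_m^h = a + b m$. For a particular solution, the nonhomogeneous term is a constant $-2$, corresponding to $s=1$ with multiplicity $q=2$ in the notation of the asymmetric proof; the template $X_m^p = m^q \cdot (\text{polynomial})$ therefore suggests trying $X_m^p = c m^2$. Substituting into the recurrence gives $c(m+1)^2 - 2 c m^2 + c(m-1)^2 = 2c = -2$, so $c = -1$ and $X_m^p = -m^2$.

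Combining, $X_m = a + b m - m^2$. The boundary condition $X_0 = 0$ forces $a = 0$, and $X_w = 0$ gives $b w - w^2 = 0$, so $b = w$. This yields $X_m = w m - m^2$, as claimed.

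There is no real obstacle here; the only subtlety worth highlighting is the multiplicity $q=2$ when choosing the form of the particular solution, which is exactly what prevents taking $p \to 1/2$ naively in the asymmetric formula. As a sanity check one can verify that $X_m = wm - m^2$ is maximized at $m = w/2$ with value $w^2/4$, matching the intuition that the channel lives longest when it starts fully balanced, and that the expression agrees with the known gambler's ruin expected hitting time.
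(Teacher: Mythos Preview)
Your proof is correct and follows essentially the same route as the paper's own: both set up the recurrence $X_{m+1}=2X_m-X_{m-1}-2$ with $X_0=X_w=0$, solve the homogeneous part via the double root $x=1$, find the particular solution $-m^2$ using the multiplicity $q=2$, and determine the constants from the boundary conditions. The only difference is cosmetic (you add a sanity check about the maximum at $m=w/2$).
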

\begin{proof}
	Let $X_m$ denote the expected channel lifetime from the state where Alice has $m$ bitcoins and Bob has $w-m$. The channel lasts until one of the players reaches a balance of zero bitcoins.

	Then $X_m = 1 + 1/2  \cdot X_{m-1} + 1/2 \cdot X_{m+1}$, which gives the recurrence
	$X_{m+1} = 2 \cdot X_m - X_{m-1} - 2$ (see, e.g., \cite{recurrence_lecture_notes}).
	
We have $X_{0} = 0$, since if Alice has zero in her account and wants to make a transfer, then the channel is reset. Similarly, $X_{w} = 0$ corresponds to the case where Bob has zero in his account.
	
	
	Then the associated homogeneous recurrence relation is $X_{m+1}^h =2 \cdot X_m^h - X_{m-1}^h$, with 
	characteristic equation $x^2 - 2 x + 1 = 0$.
The roots of the associated characteristic equation are $x_1 = x_2 = 1$, and so the general solution has the following form, where $a,b$ are constants:
\begin{equation} \label{eq:general_phalf}
X_m^h = a + bm \,.
\end{equation}

In general, if $s$ is a root with multiplicity $q$ of the characteristic equation, there exists a general solution to the recurrence such that for some constants $p_0 \ldots p_t$:
\begin{equation} \label{eq:sym_general}
X_m^p = m^q  \left( \sum_{k=0}^t p_k \cdot m^k \right)  s^m
\end{equation}

The particular solution, given by Equation \ref{eq:sym_general}, has $q=2$, $s=1$, and $t=0$, so
$
X_m^p = vm^2
$ for some constant $v$. Then
$$
v \cdot (m+1)^2 =2 v \cdot m^2 - v \cdot (m-1)^2 - 2 \iff v = -1 \,.
$$
The particular solution is  $X_m^p = -m^2$.
Adding the homogeneous part (\ref{eq:general_phalf}) and the non-homogeneous part (\ref{eq:sym_general}) gives:
$$
X_m = X_m^h + X_m^p = a + bm - m^2\,.
$$
The boundary condition $X_0 = 0$ gives $a = 0$, while $X_w = 0$ gives $b = w$. Then for all $n = 0 \ldots w$, the expected number of transfers made on the channel is 
$
X_m = w m -m^2
$.
\end{proof}

\begin{corollary}
	Let Alice and Bob have a channel with $w$ bitcoins, which they use to transfer single bitcoins to each other at a time.
	Suppose both users transfer to each other using a Poisson process with mean $\lambda_A = \lambda_B = \lambda$.
	Then the expected number of days for the channel started from the state where Alice has $m$ bitcoins and Bob has $w - m$ bitcoins is: 
\begin{align}
\tilde{X}_m = \frac{w m - m^2}{2\lambda} &&\text{[days]} \notag 
\end{align}
\end{corollary}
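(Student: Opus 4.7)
The plan is to reduce the corollary directly to Theorem~\ref{thm:symmetric_life} by exploiting the standard superposition and thinning properties of Poisson processes. I would first argue that, because Alice's and Bob's transfer streams are independent Poisson processes with common rate $\lambda$, their superposition is a single Poisson process of rate $2\lambda$ governing the arrival of transfer events on the channel. Conditional on a transfer event occurring, the probability that it was generated by Alice is $\lambda/(\lambda+\lambda)=1/2$, and likewise for Bob, exactly matching the hypothesis of Theorem~\ref{thm:symmetric_life}. Hence the number-of-transfers analysis carries over unchanged: starting from the state with $m$ bitcoins on Alice's side, the expected number of transfer events until one side is depleted equals $wm-m^2$.

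Next I would convert that count into an elapsed time. For a Poisson process of rate $2\lambda$, the interarrival times are i.i.d. exponentials with mean $1/(2\lambda)$ days. Since the stopping time (first time a side hits zero) is a stopping time with respect to the natural filtration of the superposed process, Wald's identity, or equivalently the optional stopping theorem applied to the compensated process $t\mapsto N_t - 2\lambda\, t$, gives
\[
\mathbb{E}[\text{lifetime in days}] \;=\; \frac{1}{2\lambda}\,\mathbb{E}[\text{number of transfers}] \;=\; \frac{wm-m^2}{2\lambda},
\]
which is the claimed formula.

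The only subtlety I anticipate is justifying the use of Wald / optional stopping, i.e. confirming that the stopping time has finite expectation so the conversion from transfer count to calendar time is rigorous. This is immediate from Theorem~\ref{thm:symmetric_life} itself, since $wm-m^2<\infty$, so no real obstacle arises. Otherwise, the proof is essentially a one-line rescaling of the already-established discrete-time result.
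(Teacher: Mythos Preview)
Your proposal is correct and follows essentially the same approach as the paper: reduce to Theorem~\ref{thm:symmetric_life} via the thinning property of Poisson processes (each transfer is Alice's with probability $1/2$), obtain the expected number of transfers $wm-m^2$, and then convert to days by dividing by the total rate $2\lambda$. The paper simply states this last step as ``divide by the mean number of transfers per day,'' whereas you supply the rigorous justification via Wald's identity; otherwise the arguments are identical.
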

\begin{proof}
When the players send to each other with equal means, Alice sends to Bob a single unit with probability $1/2$ and Bob sends to Alice with probability $1/2$.
From Theorem \ref{thm:symmetric_life}, the expected lifetime from the state where Alice has $m$ bitcoins is $X_m = w m - m^2$.
To obtain a solution in days, we divide this by the mean number of transfers per day, which is $2 \lambda$. 
\end{proof}
\fi 

\subsection{Channel Optimal Initialization}
Given the expressions for the channel lifetime, Alice and Bob can calculate the optimal way to initialize a channel with a fixed capacity $w$, i.e. the initial balance $(a,b)$ that maximizes the expected lifetime of the channel.

\if\symmetric1
\begin{lemma}
	The optimal initialization of a symmetric channel with capacity $w$ is for both Alice and Bob to start with $w/2$ bitcoins.
\end{lemma}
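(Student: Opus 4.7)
The plan is to directly maximize the closed-form expression for the channel lifetime given by Theorem \ref{thm:symmetric_life}. Since we already have $X_m = wm - m^2$ as the expected lifetime when Alice starts with $m$ bitcoins (and Bob with $w-m$), the lemma reduces to a one-variable optimization problem over $m \in [0, w]$.

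First, I would treat $m$ as a continuous variable and compute $\frac{dX_m}{dm} = w - 2m$. Setting this equal to zero yields the unique critical point $m^* = w/2$. The second derivative is $-2 < 0$, confirming that this critical point is a global maximum of the concave quadratic $X_m$ on the relevant interval. At the endpoints $m = 0$ and $m = w$ the lifetime is zero (as noted in the boundary conditions of the theorem), so the interior maximum is indeed the optimal initialization.

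Finally, since the problem is over integer values of $m$ in $\{0, 1, \ldots, w\}$ rather than over reals, I would note that when $w$ is even, $m^* = w/2$ is already an integer and yields the maximum; when $w$ is odd the true maximizers are the integers $\lfloor w/2 \rfloor$ and $\lceil w/2 \rceil$, which split the capacity as evenly as possible. The main step is just the elementary quadratic maximization; the only subtlety worth mentioning is the integrality caveat, but since $w$ in the model represents a continuous capacity in bitcoins it can be ignored without loss.
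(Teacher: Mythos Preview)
Your proposal is correct and matches the paper's own proof: both simply maximize the quadratic $X_m = wm - m^2$ (equivalently $\frac{wm-m^2}{2\lambda}$) over $m$, obtaining the maximum at $m=w/2$. Your additional remarks on concavity and integrality are sound but not needed, since the paper treats $w$ as a continuous capacity.
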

\begin{proof}
The expected channel lifetime from the state where Alice has $m$ bitcoins is ${wm - m^2}/{(2\lambda)}$; the lifetime is maximized when $m = w/2$. An optimally initialized symmetric channel is expected to last for:
$T(w) = {w^2 /(8\lambda)}$.
\end{proof}
\fi 

\if\asymmetric1
\begin{lemma}
The (approximately) optimal initialization of an asymmetric channel is to give all the funds to the player with the largest flow.
\end{lemma}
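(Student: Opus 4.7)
The plan is to apply the linear approximation of the channel lifetime set out immediately above the lemma. Because that approximation is affine in the initial balance, the optimization reduces to checking monotonicity in a single variable, and the claim follows with essentially no computation.

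Concretely, I would relabel without loss of generality so that $\lambda_A > \lambda_B$, so that Alice is the high-flow player and the drift is from Alice to Bob. Writing $m$ for Alice's initial balance and $w-m$ for Bob's, the linear approximation reads $\tilde{X}_m = m/(\lambda_A - \lambda_B)$, which is strictly increasing in $m \in [0,w]$. Hence the maximum is attained at $m = w$: all the funds are placed on Alice's side, which is exactly the player with the larger outflow, as claimed. By symmetry, had we instead had $\lambda_B > \lambda_A$, the same argument would place all the funds on Bob's side.

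The parenthetical ``approximately'' is what makes this one-line argument rigorous, and it is also the main subtlety. The exact formula from the corollary contains an extra exponential term in $(\lambda_A/\lambda_B)^m$ that becomes non-negligible as $m$ approaches $w$, so the true maximizer of the exact lifetime sits slightly below $w$; differentiating the exact expression and solving shows the gap is only an additive lower-order correction in $w$ (of order $\log_{\lambda_A/\lambda_B} w$). So ``give everything to the high-flow player'' is exactly optimal under the linear approximation adopted by the paper, and within a lower-order loss under the exact formula. The only real obstacle is bookkeeping: making sure the WLOG relabeling is compatible with the sign conventions of the preceding definition, so that $\tilde{X}_m$ is genuinely monotone increasing in the intended variable rather than in its reflection $w-m$.
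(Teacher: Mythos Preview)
Your proposal is correct and follows exactly the paper's approach: invoke the linear approximation $\tilde{X}_m = m/|\lambda_A-\lambda_B|$ and observe it is monotone in the high-flow player's initial balance, so the optimum puts all funds there. Your additional remarks on the exact formula's lower-order correction go beyond what the paper records, but the core argument is identical.
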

\begin{proof}
In this case we solve the linear approximation, which is equivalent to a deterministic flow from Alice to Bob if $\lambda_A > \lambda_B$ and from Bob to Alice if $\lambda_B > \lambda_A$. Then it is immediate that the optimal way to initialize the channel is to give all the money to the player with the larger flow. The (approximate) lifetime of the optimally initialized channel is:
$\tilde{T}(w) = {w\over  |\lambda_A- \lambda_B|}$.
\end{proof}
\fi 

So far, we assumed that each transfer between 
Alice and Bob consists of a single coin.
However, it is easy to generalize the results
to a situation in which each transfer between Alice and Bob consists of $z$ bitcoins, for some constant $z$. 
In this case, the channel funding $w$ is given in \emph{transfers}, i.e, the channel lifetime is $T(w)$ when the channel is funded with $z\cdot w$ \emph{bitcoins}.

\subsection{Channel Optimal Capacity}
Alice and Bob have two options for performing a sequence of transactions, namely sending money to each other on the blockchain or through a channel in the lightning network. In the case of blockchain transactions, there will be a fixed fee $\phi$ paid for each transaction, where $\phi$ is equal to the price (in bitcoins) of a blockchain record. 

If on the other hand Alice and Bob decide to use a lightning channel, they will incur an interest $r$ on the money that is locked into the channel, in addition to the fee $a\cdot \phi$ paid when the channel is reset and recorded on the blockchain. 
We assume that this cost is covered by charging a fixed fee per lightning-transaction. The fee is calculated such that the expected fee charged during the entire channel lifetime equals the total cost of maintaining the channel.

\begin{lemma} \label{lem:expected_fee}
	Let $\phi$ be the blockchain mining fee and $r$ the daily interest rate for locking money in a lightning channel. 
	Suppose the transfers between Alice and Bob are Poisson processes with means $\lambda_A,\lambda_B$ transfers-per-day, respectively, with 
	$\ell := \lambda_A + \lambda_B$.
	Each transfer is of $z$ bitcoins,
	and Alice and Bob use 
	a lightning channel with capacity $w\cdot z$ bitcoins.
	
	Then, to cover the maintenance costs of the channel, the fee per transaction should be:
	\begin{align*}
	{F(w) = \frac{w z \cdot (1+r)^{T(w)} +a \phi - w z}{T(w)\cdot \ell}} && \text{\emph{[bitcoins / transaction]}}
	\end{align*}
	where $T(w)$ is the expected channel lifetime (in days).
\end{lemma}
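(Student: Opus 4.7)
The plan is to account for the total expected cost of maintaining the channel over one full lifetime, and then divide by the expected number of transactions carried during that lifetime. Since the fee is chosen so that expected revenue from fees equals expected cost, this quotient is the required per-transaction fee.

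First, I would identify the two cost components incurred during a single channel cycle (from initialization to reset). The opportunity cost of locking $wz$ bitcoins for $T(w)$ days at daily interest rate $r$ is, under compounding, $wz(1+r)^{T(w)} - wz$; intuitively, this is the difference between what the locked amount could have grown to if invested at rate $r$ and its nominal value. The second component is the reset fee $a\phi$ paid to the blockchain when the channel is closed and re-opened (recall $a$ is the number of blockchain records consumed by a reset transaction and $\phi$ is the per-record market price). The total expected cost per lifetime is therefore $wz(1+r)^{T(w)} + a\phi - wz$.

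Next, I would compute the expected number of transactions carried over one lifetime. By the superposition property of independent Poisson processes, the combined stream of transfers between Alice and Bob is Poisson with rate $\lambda_A + \lambda_B = \ell$ transfers-per-day. Over $T(w)$ days the expected number of transfers is then simply $T(w) \cdot \ell$.

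Finally, setting expected revenue equal to expected cost, if each transfer is charged a fee $F(w)$ then the expected total fee revenue per cycle is $F(w) \cdot T(w) \cdot \ell$. Equating this with the total cost derived above and solving for $F(w)$ yields
\[
F(w) = \frac{wz(1+r)^{T(w)} + a\phi - wz}{T(w) \cdot \ell},
\]
which is precisely the claimed formula. The proof is essentially an accounting identity, so there is no significant obstacle; the only subtle point worth noting explicitly is the use of the compounded opportunity cost $wz(1+r)^{T(w)} - wz$ rather than a simple-interest approximation, and the fact that the expected total transaction count over a random lifetime factors cleanly as $T(w) \cdot \ell$ because $T(w)$ is the expected lifetime in days and the combined arrival process is Poisson with rate $\ell$.
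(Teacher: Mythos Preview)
Your proposal is correct and follows essentially the same argument as the paper: identify the interest cost $wz\bigl((1+r)^{T(w)}-1\bigr)$ and the reset cost $a\phi$, compute the expected number of transfers $T(w)\cdot\ell$, and equate expected fee revenue with expected cost. Your write-up is in fact slightly more explicit than the paper's, since you spell out the Poisson superposition and the compounding interpretation of the interest term.
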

\begin{proof}
The players pay interest on the quantity $w z$ locked in the channel until the channel closes, i.e. for $T(w)$ days in expectation. Since the interest is paid each day (rather than for each transaction), the cost due to the interest is:
$$
\left((1+r)^{T(w)} - 1 \right) \cdot w z \,.
$$
The users incur an additional cost of $a \phi$ at the end for recording the channel balance on the blockchain. 

The expected number of transactions during the channel lifetime is $T(w)\cdot \ell$, so the expected fee charged during the channel lifetime is $F(w)\cdot T(w)\cdot \ell$. This fee should cover the costs, so we should have:
\begin{align*}
\left((1+r)^{T(w)} - 1 \right) \cdot w z + a\cdot w 
= 
F(w)\cdot T(w)\cdot \ell,
\end{align*}
from which the claim follows.
\end{proof}

\begin{lemma}[First-order approximation]
	\label{lem:foa}
	Consider the function that gives the expected fee per transaction:
	$$F(w) =
	\frac{
		\left((1+r)^{T(w)} - 1 \right) \cdot w z + a \phi
	}
	{
		T(w)\cdot \ell
	}\,.$$
		The first-order approximation of $F$ 
 taken around $r = 0$ is
	\begin{align*}
	\tilde{F}(w) 
	= 
	\frac{w z r}{\ell} + \frac{a \phi}{(T(w)\cdot \ell)},
	\end{align*}
	where $T(w)$ is the expected channel lifetime and is independent of the interest rate $r$.
\end{lemma}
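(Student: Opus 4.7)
The plan is straightforward: this is just a first-order Taylor expansion of $F(w)$ as a function of $r$, evaluated at $r = 0$, treating $T(w)$ as a constant (since the channel lifetime derived earlier depends only on capacity and transfer rates, not on the interest rate).

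First, I would observe that since $T(w)$ does not depend on $r$, only the factor $(1+r)^{T(w)}$ inside the numerator contributes any $r$-dependence. So I would regard $F$ as a smooth function of $r$ and write the usual expansion $F(r) = F(0) + r \cdot F'(0) + O(r^2)$. To compute $F(0)$, substitute $r = 0$ so that $(1+r)^{T(w)} - 1 = 0$, killing the first summand in the numerator and leaving $F(0) = \frac{a\phi}{T(w)\cdot \ell}$, which matches the second term of $\tilde{F}(w)$.

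Next I would differentiate once with respect to $r$. Since $\frac{d}{dr}(1+r)^{T(w)} = T(w)(1+r)^{T(w)-1}$, the derivative of the numerator is $T(w)(1+r)^{T(w)-1}\cdot wz$, and dividing by the (constant in $r$) denominator $T(w)\cdot \ell$ gives $F'(r) = \frac{wz\,(1+r)^{T(w)-1}}{\ell}$. Evaluating at $r = 0$ produces $F'(0) = \frac{wz}{\ell}$, which is exactly the coefficient of $r$ in $\tilde{F}(w)$.

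Assembling these two pieces gives $\tilde{F}(w) = F(0) + r \cdot F'(0) = \frac{a\phi}{T(w)\cdot \ell} + \frac{wzr}{\ell}$, as claimed. There is no real obstacle here; the only point that requires a sentence of justification is the remark that $T(w)$ is independent of $r$, so that one may safely pull $T(w)\cdot \ell$ out as a constant denominator when differentiating, and the only $r$-dependence lives in the single factor $(1+r)^{T(w)}$. A sanity check on the formula: the second term $\frac{a\phi}{T(w)\ell}$ is the amortized blockchain cost per transaction (one reset of cost $a\phi$ spread over the expected $T(w)\ell$ transactions in the channel's lifetime), while the first term $\frac{wzr}{\ell}$ is the per-transaction share of the daily interest $wzr$ on the locked capital $wz$, divided by the $\ell$ transactions per day; both terms have the right units and the right interpretation, which supports the derivation.
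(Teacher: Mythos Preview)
Your proof is correct and takes essentially the same approach as the paper: a first-order Taylor expansion in $r$ around $r=0$, using that $T(w)$ is independent of $r$. The only cosmetic difference is that the paper isolates $g(r)=(1+r)^{T(w)}$, linearizes it to $1+rT(w)$, and then substitutes back into $F$, whereas you differentiate $F$ directly; the computations and result are identical. The paper additionally records an explicit bound on the remainder via $g''(r)$, which is not part of the lemma statement and which you reasonably omit.
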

\begin{proof}
	Let $g(r) = (1+r)^{T(w)}$, where $r \in \Re^{+}$. 
	Then 
	\begin{align*}
	F(w) =
	\frac{
		\left(g(r) - 1 \right) \cdot w z + a \phi
	}
	{
		T(w)\cdot \ell
	}
	\end{align*}
	Since $T(w)$ is independent of $r$, the first derivative of $g$ is: $$g'(r) =
	T(w) \cdot (1+r)^{T(w)-1}\,.$$
	Then the first order approximation of $g(r)$ around $r_0 = 0$ is:
	\begin{align*}
	\tilde{g}(r) = & \; g(0) + g'(0) \cdot (r - r_0) 
	= 1 + {r \cdot T(w)}.
	\end{align*}
	Substituting $\tilde{g}(r)$ for $g(r)$ in $F(w)$, we get: 
	\begin{align*}
	\tilde{F}(w) 
	= & 
	\frac{w z r T(w) + a \phi}{T(w)\cdot \ell} \,.
	\end{align*}
	The error term for the estimation of $g(r)$ is given by a function $h(r)$ with the property that $|h(r)| \leq \frac{M r^2}{2}$, where $M$ is the maximum value taken by $|g''(r)|$ on the interval $[0,c]$. We have: $$g''(r) = T(w)\left(T(w) - 1 \right) \cdot (1+r)^{T(w)-2}\,.$$
	There are a few cases:
	\begin{itemize}
		\item $x \geq 2 \sqrt{2 \ell}$: $M_1 = \max_{r \in [0,c]} |g''(r)|$ is attained for $r = 1$, so we have that $$M_1 = \left|T(w) \cdot \left( T(w) - 1 \right) \cdot (1+c)^{T(w) - 2} \right|\,.$$
		\item $x < 2 \sqrt{2\ell}$: the maximum is $M_2 = \left|T(w) \left( T(w) - 1 \right) \right|$, attained for $r = 0$.
	\end{itemize} 
	Take $M = \max\{M_1, M_2\}$. 
		Let $R(w) = F(w) - \tilde{F}(w)$ be the error term in the approximation of $F(w)$. Then 
	\begin{align}
	|R(w)| = \left| \frac{w z \cdot (g(r) - \tilde{g}(r))}{T(w)\cdot \ell} \right| 
	= \frac{w z \cdot \left|h(r) \right|}{T(w)\cdot \ell}  \leq \frac{2Mr^2}{x} \, .\notag 
	\end{align}
\end{proof}

The cost of a transaction on the blockchain is $\phi$, thus performing transactions on the lightning network is profitable for Alice and Bob as long as $F \leq \phi$.

\if\symmetric1
\begin{lemma}[First order approximation of symmetric channel capacity] \label{lem:optimal-capacity}
	Let $\phi$ be the blockchain mining fee and $r$ the daily interest rate for locking money in a lightning channel. 
	Suppose the transfers between Alice and Bob are Poisson processes with means $\lambda_A = \lambda_B = \lambda$ transfers-per-day, with 
	$\ell := \lambda_A + \lambda_B = 2\lambda$.
	Each transfer is of $z$ bitcoins,
	and Alice and Bob use 
	a lightning channel with capacity $w\cdot z$ bitcoins.
	
	Then the first order approximation of the expected cost per transaction on the channel is $$\frac{z r w}{\ell} + \frac{4 a \phi}{w^2}\,.$$
	The optimal channel capacity based on this approximation is $w_{opt} = \left({\frac{8 a \phi \ell}{z r}}\right)^{1/3}$.
\end{lemma}
\begin{proof}
	From Lemma \ref{lem:foa}, the 
	first-order approximation of the lightning-fee $F$ is:
	$
	\tilde{F}(w) 
	= 
	{w z r}/{\ell} + {a \phi}/{(T(w)\cdot \ell)}\,.
	$
	
	For a symmetric channel, $T(w) = w^2/4 \ell$ [days], thus 
	$$
	\tilde{F}(w) = 
	\frac{z r w}{\ell} + \frac{4 a \phi}{w^2},
	$$
	as claimed. The error term is given by:
	$$
	|R(w)| 
	= \frac{w z}{T(w)\cdot \ell} \cdot \left|h(r) \right|
	= \frac{4 z}{w} \cdot \left|h(r) \right|
	\leq \frac{2Mr^2}{x}\,.
	$$
	
	Minimizing the fee given the interest rate $r$, blockchain fee $\phi$, and mean number $\ell$ of daily transfers, is equivalent to minimizing the function 
	\begin{align*}
	f(x) =  \frac{4 z x \cdot (1+r)^{\frac{x^2}{4\ell}} + 4 a \phi - 4 z x}{x^2}, 
	\end{align*}
	{where} $x \in [1, \infty)$ is the channel capacity [in transfers].
	
	To find the optimal channel funding for a given $r$, $\phi$, $\ell$, we must find the minimum of $f(x)$ on the interval $[1, \infty)$. 
	As explained before, the first-order approximation of $f$ is:
	\begin{align*}
	\tilde{f}(x) 
	= \frac{z r x}{\ell} + \frac{4 a \phi}{x^2}\,.
	\end{align*}
	
	The optimal channel funding based on this approximation can be determined by taking the minimum of the function $\tilde{f}(x)$ on the range $(0, \infty)$. Note that $$\tilde{f}'(x) = \frac{z r}{\ell} - \frac{8 a \phi}{x^3} \; \; \; \mbox{and} \; \; \; \tilde{f}''(x) = \frac{24 a \phi}{x^4} > 0\,.$$ Thus $\tilde{f}$ is convex and the minimum is attained for $x_{min}$ with the property that $\tilde{f}'(x_{min}) = 0$, that is, 
	$
	x_{min} = \left(\frac{8 a \phi \ell}{z r}\right)^{1/3}
	$.

	The lightning fee given the optimal funding is:
	\begin{align*}
	F^{opt} = F(x_{\min}) = 
	3\left({\frac{a \phi z^2 r^2}{\ell^2}}\right)^{1/3} \,.
	\end{align*}
		This completes the proof.
\end{proof}
\fi 

\if\asymmetric1
\begin{lemma}[First order approximation of asymmetric channel capacity]
	Let $\phi$ be the blockchain mining fee and $r$ the daily interest rate for locking money in a lightning channel. 
	Suppose the transfers between Alice and Bob are Poisson processes with means $\lambda_A \neq \lambda_B$ transfers-per-day, with 
	$\ell := \lambda_A + \lambda_B$
	and $\Delta := |\lambda_A - \lambda_B|$.
	Each transfer is of $z$ bitcoins,
	and they use 
	a lightning channel with capacity $w\cdot z$ bitcoins.
	
	Then, the first order approximation of the expected cost per transaction on the channel is $\frac{w z r}{\ell} + \frac{a \phi \Delta}{w\cdot \ell}$. The optimal channel capacity based on this approximation is $w_{opt} = \sqrt{\frac{a \phi \Delta}{z r}}$.
\end{lemma}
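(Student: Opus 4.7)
The plan is to follow the template already set out in the symmetric case (Lemma \ref{lem:optimal-capacity}), simply substituting the asymmetric channel-lifetime in place of the symmetric one. The proof has two parts: deriving the first-order formula for $\tilde{F}(w)$, and then minimizing it.

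First, I would invoke Lemma \ref{lem:foa}, which already gives the general first-order approximation
\[
\tilde{F}(w) \;=\; \frac{w z r}{\ell} \;+\; \frac{a \phi}{T(w)\cdot \ell},
\]
valid for any channel whose lifetime $T(w)$ does not depend on $r$. For the asymmetric regime we are in, the relevant lifetime is the linear approximation established earlier in the paper, namely $T(w) = w/|\lambda_A - \lambda_B| = w/\Delta$ days (from the definition following Figure \ref{fig:channel-lifetime}). Substituting this into the displayed formula collapses $\frac{a\phi}{T(w)\cdot \ell}$ to $\frac{a\phi\Delta}{w\ell}$, which yields exactly the claimed expression
\[
\tilde{F}(w) \;=\; \frac{w z r}{\ell} \;+\; \frac{a \phi\,\Delta}{w\cdot \ell}.
\]

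For the optimization, I would treat $\tilde{F}$ as a function of $w$ on $(0,\infty)$ and differentiate. The derivative is $\tilde{F}'(w) = \frac{zr}{\ell} - \frac{a\phi\Delta}{w^2 \ell}$ and the second derivative $\tilde{F}''(w) = \frac{2 a\phi\Delta}{w^3 \ell} > 0$, so $\tilde{F}$ is strictly convex on $(0,\infty)$ and its unique interior minimizer is obtained by setting $\tilde{F}'(w) = 0$. Solving gives $w^2 = \frac{a\phi\Delta}{zr}$, i.e.\ $w_{opt} = \sqrt{\frac{a\phi\Delta}{zr}}$, as claimed.

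There is essentially no hard step here: both ingredients are already in place (the first-order expansion of $F$ via Lemma \ref{lem:foa}, and the linear asymptotic for $T(w)$). The only mild subtlety, which I would note but not belabor, is that the optimization is performed on the first-order approximation $\tilde{F}$ rather than on the exact $F$, so the error bound from Lemma \ref{lem:foa} should be inherited to justify calling $w_{opt}$ an approximate optimum; this is analogous to the handling in the symmetric case and does not change the stated formula.
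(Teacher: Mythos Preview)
Your proposal is correct and matches the paper's own proof essentially step for step: invoke Lemma~\ref{lem:foa}, substitute the asymmetric lifetime $T(w)=w/\Delta$, and minimize by differentiation. The paper is slightly terser (it omits the explicit convexity check via $\tilde{F}''$), but the approach is identical.
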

\begin{proof}
	In the asymmetric case, the channel lifetime is $T(w) = w / \Delta$ [days]. Substituting this in Lemma \ref{lem:foa} gives:
	\begin{align*}
	\tilde{F}(w) 
	= & 
	\frac{w z r}{\ell} + \frac{a \phi}{T(w)\cdot \ell}
	=
	\frac{w z r}{\ell} + \frac{a \phi \Delta}{w\cdot \ell}
	\end{align*}
	By taking the derivative by $w$ we get that the optimal funding is:
	\begin{align*}
	w_{opt} = \sqrt{\frac{a \phi \Delta}{z r}}.
	\end{align*}
	as claimed.
	Then the lightning fee given the optimal funding is:
	\begin{align*}
	F^{opt} = \tilde{F}(w_{opt}) = 
	2\sqrt{
		\frac{a \phi \Delta z r}{\ell^2}
	}
	\end{align*}
\end{proof}
\fi 

\section{Pairs Topology}
In this section we assume that the
network of transfers has a very simple topology: the users are divided to pairs who only trade with each other.

Our goal is to calculate the market-equilibrium blockchain-fee $\phi$.
To this end, we have to calculate the demand for each value of $\phi$ and find the value of $\phi$ in which the demand equals the supply (which we assume is constant, $\tau$ records per day). 

\subsection{Symbolic computations}
We start by analyzing the demand of a single pair.
\begin{lemma}[Choice of a single pair]
\label{lem:choice}
Let $\phi$ be the blockchain mining fee.
Suppose the transfers between Alice and Bob are Poisson processes with means $\lambda_A, \lambda_B$ transfers-per-day respectively, where 
$\ell := \lambda_A + \lambda_B$, $\Delta:=|\lambda_A - \lambda_B|$, and each transfer is of $z$ bitcoins. 
Then there exist thresholds $\tNL,\tNB,\tLB$ such that Alice and Bob:
\begin{itemize}
	\item do not transfer to each other when $z < \min\{ \tNL,\tNB \}$.
	\item  transfer only via lightning when $\tNL < z < \tLB$.
	\item  transfer only via  blockchain when $z > \max\{\tNB,\tLB \}$.
\end{itemize}
The thresholds depend on the transfer rates.
\if\symmetric1
With symmetric transfer rates, they are: 
\begin{align} 
\tNL  = {27 a r^2 \over \ell^2\beta^3} \cdot \phi; \; \; \; 
\tNB = {\phi \over \beta}; \; \; \; 
\tLB  = {\ell\over \sqrt{27 a r^2}} \cdot \phi\,. \notag 
\end{align} 
\fi 
\if\asymmetric1
With asymmetric transfer rates: $\tNL = {4\Delta a r \over \ell^2\beta^2} \phi$, $\tNB = {\phi\over \beta}$, and $\tLB = {\ell^2\over 4\Delta a r} \phi$.
\else  
\fi   
\end{lemma}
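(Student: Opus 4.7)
The plan is to cast the pair's decision as maximizing one of three candidate net gains, and then identify the three thresholds as pairwise indifference points. Since the utility per transfer is $v = \beta z$, the three options yield net gains
\[
U_N = 0, \qquad U_B = \beta z - \phi, \qquad U_L = \beta z - F^{\text{opt}}(z),
\]
where $F^{\text{opt}}(z)$ is the optimal per-transaction lightning fee derived in Lemma~\ref{lem:optimal-capacity}, namely $F^{\text{opt}}(z) = 3\sqrt[3]{a\phi z^2 r^2/\ell^2}$ in the symmetric case (and $2\sqrt{a\phi \Delta z r/\ell^2}$ in the asymmetric case).

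The first step is to define $\tNB$, $\tNL$, and $\tLB$ as the unique positive solutions of $U_B = 0$, $U_L = 0$, and $U_B = U_L$, respectively. The equation $U_B = 0$ is linear and gives $\tNB = \phi/\beta$ directly. For $U_L = 0$, I rearrange to $\beta z = 3(a\phi r^2/\ell^2)^{1/3} z^{2/3}$, cube both sides, and divide by $z^2$, yielding $\tNL = 27 a r^2 \phi/(\beta^3 \ell^2)$. For $U_B = U_L$, the condition collapses to $F^{\text{opt}}(z) = \phi$, which after cubing and solving for $z$ gives $\tLB = \phi \ell/\sqrt{27 a r^2}$. In the asymmetric case the analogous equations become quadratic after squaring (because $F^{\text{opt}}(z) \propto \sqrt{z}$), and yield the thresholds stated for that case.

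The second step is to verify that these thresholds correctly partition the $z$-axis according to the pointwise argmax of $\{U_N, U_L, U_B\}$. I exploit two monotonicity facts: $U_B(z)$ is strictly increasing and linear, and $U_B(z) - U_L(z) = F^{\text{opt}}(z) - \phi$ is strictly increasing in $z$ (since $F^{\text{opt}}$ is). These guarantee single crossings at $\tNB$ and $\tLB$; together with the fact that $U_L(z) < 0$ for $0 < z < \tNL$ and $U_L(z) > 0$ for $z > \tNL$, I can read off the winner in each region. For $z < \min\{\tNL, \tNB\}$, both $U_L$ and $U_B$ are negative, so $U_N = 0$ wins; for $\tNL < z < \tLB$ we have $U_L > 0$ and $U_L > U_B$, so lightning wins; for $z > \max\{\tNB, \tLB\}$ we have $U_B > 0$ and $U_B > U_L$, so blockchain wins.

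The only subtlety worth flagging is that the three crossings need not be ordered as $\tNL < \tNB < \tLB$ for arbitrary parameters; when the ordering degenerates, the middle lightning region may be empty. The $\min\{\cdot\}$ and $\max\{\cdot\}$ wrappers in the statement absorb this automatically, so no separate case analysis is required beyond the monotonicity argument above.
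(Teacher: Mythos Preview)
Your proof is correct and follows essentially the same approach as the paper: set up the three net utilities $0$, $\beta z - \phi$, and $\beta z - F^{\mathrm{opt}}(z)$, solve the three pairwise indifference equations to obtain $\tNL,\tNB,\tLB$, and read off the winner in each region. Your treatment is in fact slightly more explicit than the paper's, which simply records the three equivalences and then says ``the claim follows by straightforward arithmetic manipulations''; your monotonicity argument and the remark about the $\min/\max$ wrappers absorbing degenerate orderings make the partition step clearer.
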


\begin{proof}
The users consider three options: (1) do not transfer at all, (2) transfer through the lightning channel between them, or (3) use direct blockchain transfers.
They will choose the option with the 
lowest net utility. We assume that the users have quasi-linear utilities, so the utility of each user is the value of the transfer minus the fee paid.

The net utility of doing no transfer is clearly $0$.
If the value of a transfer from Alice to Bob is $v = \beta\cdot z$, then the net utility of doing a blockchain transfer is:
$u_{B} =
{v - \phi}
=
\beta z - \phi\,.$
The net utility of a lightning transfer is $v$ minus the lightning fee calculated in the previous subsection, assuming optimal channel funding: 
$$u_{L} = \beta z - F^{opt} = \beta z - 
3\left({a \phi z^2 r^2 \over \ell^2}\right)^{{1}/{3}}\,.$$

The possible relations between the three net utilities 
\if\asymmetric1
are given by:
\begin{align*}
u_L > 0 && \iff  && z > {4\Delta a r \over \ell^2\beta^2} \phi
&= \tNL
\\
u_B > 0 && \iff  && z > {1\over \beta} \phi
&= \tNB
\\
u_B > u_L && \iff  && z > {\ell^2\over 4\Delta a  r} \phi
&= \tLB
\end{align*}
\fi 
\if\symmetric1
 are:
 \begin{itemize}
     \item $u_L > 0  \iff   z > 
\left(\frac{27 a r^2}{\ell^2\beta^3}\right) \cdot \phi
= \tNL$.
\medskip 
\item $u_B > 0 \iff   z > \frac{1}{\beta} \cdot \phi
= \tNB$.
\medskip 
\item $u_B > u_L  \iff  z > 
\left(\frac{\ell}{\sqrt{27 a r^2}} \right) \cdot \phi
= \tLB$.
\end{itemize}
\medskip 
\fi 

The choice of the users depends on these parameters as follows.
If $0 > \max(u_L,u_B)$, then the users choose no-transfer.
If $u_B > \max(0,u_L)$, then the users transact via the blockchain.
If $u_L > \max(0,u_B)$, then the users transact via their lightning channel.
The claim follows from the above by straightforward calculations.
\end{proof}

It is noteworthy that all three thresholds are linear functions of the blockchain-fee $\phi$, even though 
the channel-lifetime is a non-linear function of $\phi$.

We ignore the case in which two of the three utilities are exactly equal  (e.g, $u_L = 0$ or $u_B = 0$ or $u_L = u_B$), since  the transfer size $z$ is drawn from a continuous distribution, so the probability of an exact equality is zero.

The previous lemma assumed that the transfer-size $z$ is fixed. Now, we let $z$ be a random variable, and calculate the expected demand per pair. 
We calculate the demand function both with and without lightning.
\begin{lemma}[Demand of a single pair]
\label{lem:demand}
Let $\phi$ be the blockchain mining fee.
Suppose the transfers between Alice and Bob are Poisson processes with means $\lambda_A, \lambda_B$ transfers-per-day respectively, with 
$\ell := \lambda_A + \lambda_B$ and $\Delta:=|\lambda_A - \lambda_B|$,
and the transfer-size $z$ is a random variable with probability-distribution-function $f$.

Let $\tNL,\tNB,\tLB$ be as described in Lemma~\ref{lem:choice}.
Then the expected demand of the two users, measured in records-per-day, is:
\begin{itemize}
\item Without lightning: $D^{0}(\phi) =
\ell
\int\limits_{\tNB}^{\infty}  f(z) dz$.
\item With lightning: 
\[
D^s(\phi) = 
\int\limits_{\tNL}^{\tLB} 
{\left({\ell a r^2 {z}^2 \over \phi^2}\right)^{\frac{1}{3}}}
f(z) dz
+
\ell   
\left(\int\limits_{\max(\tNB,\tLB)}^{\infty} 
f(z) dz\right)
\]
\end{itemize}



The transactions count and volume are as follows:

\begin{table}[h!] 
	\label{tab:summary}
	\begin{center}
		\begin{tabular}{||c | c|c|c|c ||} 
			\hline \hline
			 & Lightning & Blockchain  \\
			\hline \hline
			Transactions count & $\ell  
\int\limits_{\tNL}^{\tLB} 
f(z) dz$ & $\ell  
 \int\limits_{\max(\tNB,\tLB)}^{\infty} 
f(z) dz$ \\
			\hline
			Transactions volume & $\ell  
\int\limits_{\tNL}^{\tLB} 
z f(z) dz$  & $\ell  
\int\limits_{\max(\tNB,\tLB)}^{\infty} z f(z) dz$   \\
			\hline \hline 
		\end{tabular}
	\end{center}
\end{table}




\medskip 


\vspace{-3mm}
\end{lemma}
\begin{proof}
By Lemma \ref{lem:choice}, when $z > \max(\tNB,\tLB)$, the users use blockchain transfers. They do $\ell$ transfers per day, each of which uses a single record, so their demand is $\ell$ records-per-day. This explains the rightmost terms in all expressions (i.e. for $D^0(\phi)$ and $D^s(\phi)$).

When $\tNL < z < \tLB$, the users transact via their lightning channel, so their demand is 
$a$ records each $T$ days, where $T$ is the channel-lifetime (in days) when it is funded optimally. So their daily demand is 
\if\asymmetric1
$a/T = {\sqrt{\Delta a r z \over \phi}}$ in the asymmetric case, and
\fi
\if\symmetric1
$\left({\ell a r^2 {z}^2 / \phi^2}\right)^{1/3}$. 
\fi
This explains the leftmost term in the expression for $D^s(\phi)$, completing the proof.
\end{proof}

We also analyze  examples with  realistic numbers in Appendix~\ref{sub:numeric-example} and observe that usually the thresholds are ordered such that $\tNL \ll \tNB \ll \tLB$.

\medskip 

Now we can calculate the equilibrium blockchain-fee. 

\begin{theorem}[Equilibrium blockchain-fee]
\label{lem:equilibrium-fee}
Suppose there are $n$
users who only interact in pairs (i.e. there are $n/2$ pairs of users).
In every pair, all parameters ($\lambda_A, \lambda_B, \ell$) are the same, 
except for the transfer-size $z$, which is drawn randomly for each pair from probability-distribution $f$.
Then, the market-price $\phi_{eq}$ is a 
solution to the  equation:
${n\over 2}\cdot D(\phi) = \tau,$
where $D(\phi)$ is given by the expression in Lemma \ref{lem:demand} for each case. 
\end{theorem}
\begin{proof}
Assuming all pairs have the same parameters, the expected aggregate demand is ${n\over 2}\cdot D(\phi)$ records-per-day. The supply is $\tau$ records-per-day. The market-equilibrium price is the price in which the demand equals the supply.
\end{proof}

\subsection{Power-law distribution}
We now present a special case of Lemma \ref{lem:demand} for the case in which the transfer-size $z$ is distributed according to the following power-law:
\begin{align*}
f(z) = 
\begin{cases}
0 & \mbox{ if } z < \zmin
\\
{\zmin/ z^2} & \mbox{ if }  z \geq \zmin\,.
\end{cases}
\end{align*}
For simplicity we assume that $\tNB < \tLB$. 


\begin{corollary}
\label{cor:demand-powerlaw}
When the transfer-size has power-law distribution and $\tNB < \tLB$, the expected demand of the pair (measured in records-per-day) is as follows in each case:
\begin{itemize}
    \item Without lightning: 
\begin{align*}
D^{0}(\phi) & =
\begin{cases}
\ell & \mbox{\emph{ if }} \;\; \tNB < \zmin
\\
\ell \cdot {\zmin\over \tNB} & \mbox{\emph{ if }} \;\; \tNB \geq \zmin \,.
\end{cases}
\end{align*}
\end{itemize}
\begin{itemize}
    \item With lightning: 
\begin{align*}
D^s(\phi) &= 
\begin{cases}
\ell, \mbox{\emph{ if }}\;\; \tLB < \zmin
\\ \\
3\zmin\left({l a r^2\over \phi^2}\right)^{\frac{1}{3}} \left[\frac{1}{\sqrt[3]{\zmin}} - \frac{1}{\sqrt[3]{\tLB}}\right] + \ell \cdot {\zmin\over \tLB},  \\ \;\;\; \mbox{\emph{ if }}\;\; \tNL < \zmin \leq \tLB 
\\  \\
3\zmin {\left(l a r^2\over \phi^2\right)^{\frac{1}{3}}} \left[\frac{1}{\sqrt[3]{\tNL}} - \frac{1}{\sqrt[3]{\tLB}}\right] + \ell \cdot {\zmin\over \tLB}, \\ \;\;\; \mbox{\emph{ if }}\;\; \tNL \geq \zmin \,.
\end{cases}
\end{align*}
\end{itemize}

The transaction counts on lightning and blockchain, respectively, are as follows:
%
\begin{table}[h!] 
	\label{tab:summary_transaction_counts}
	\begin{center}
		\begin{tabular}{||c | c|c|c|c ||} 
			\hline \hline
			 Transaction counts & Lightning & Blockchain  \\
			\hline \hline
			$\tLB < \zmin$ & $0$ & $\ell$  \\
			\hline
			$\tNL < \zmin \leq \tLB$ & $\ell \left(1-{\zmin\over{\tLB}}\right)$  & $\ell \left({\zmin\over{\tLB}} \right)$   \\
			\hline \hline 
			\hline
			$\tNL \geq \zmin$ & $\ell \left({\zmin\over\tNL}-{\zmin\over{\tLB}} \right)$ & $\ell \left({\zmin\over{\tLB}} \right)$   \\
			\hline \hline 
		\end{tabular}
	\end{center}
\end{table}

 
The expected transaction volumes are  as follows:

\begin{table}[h!] 
	\label{tab:summary_transaction_volumes}
	\begin{center}
		\begin{tabular}{||c | c|c|c|c ||} 
			\hline \hline
			 Transaction volumes & Lightning & Blockchain  \\
			\hline \hline
			$\tLB < \zmin$ & $0$ & $\infty$  \\
			\hline
			$\tNL < \zmin \leq \tLB$ & $\ell \zmin \Bigl(\ln{\tLB} - \ln{\zmin}\Bigr)$  & $\infty$   \\
			\hline \hline 
			\hline
			$\tNL \geq \zmin$ & $\ell \zmin \Bigl(\ln{\tLB} - \ln{\tNL}\Bigr)$ & $\infty$   \\
			\hline \hline 
		\end{tabular}
	\end{center}
\end{table}

%
%
\end{corollary}

\section{Simulations}
In the theoretical analysis we assumed that the transfer-size $z$ is randomized once per pair of agents. 
We found  that the case in which the transfer-size is randomized in each transfer is much more difficult to analyze theoretically. Therefore we study this case using simulations. 

\subsection{Channel operation and definitions}
We consider a channel with a total capacity $w$. Since the transfer-rate from Alice to Bob is the same as from Bob to Alice, we assume that the channel is initialized symmetrically - each agent contributes $w/2$. The "channel state" is the balance for  Alice in the channel; therefore the initial state is $w/2$.

Whenever a transfer has to be made, we determine its size $z$ by randomly drawing from the power-law distribution defined by $f(z) = {\zmin / z^2} \; [z>\zmin]$.
Then, we check whether the transfer can be done in the current channel state (i.e, whether the agent making the transfer has a sufficiently high balance). If the transfer can be done on the channel, then it is done and the channel state is updated. Otherwise, we check whether it is worthwhile to do the transfer on the blockchain: if the transfer value $\beta z$ is larger than the current $\phi$, then the transfer is done on the blockchain; otherwise, it is not done at all.

In the initial state ($w/2$), the chances of channel failure are relatively small; as the channel drifts away from this initial state towards one of its endpoints, the chances of channel failure increase. Therefore, at some states it may be worth doing a \emph{channel reset} and returning the channel to its initial state. We assume  there is a constant $R$ such that, whenever the state after a channel transfer is smaller than $R$ or larger than $w-R$, it is reset to $w/2$. We call $R$ the \emph{reset radius}. 

\subsection{Finding the optimal reset-radius}
The first step in analyzing the channel performance is finding the optimal reset radius. The plots in Figure \ref{fig:sim-reset-radius}.a illustrate the channel performance as a function of the reset radius, for a capacity of $w=10$.

\begin{figure}[h!]
    \centering
    \includegraphics[scale=0.65]{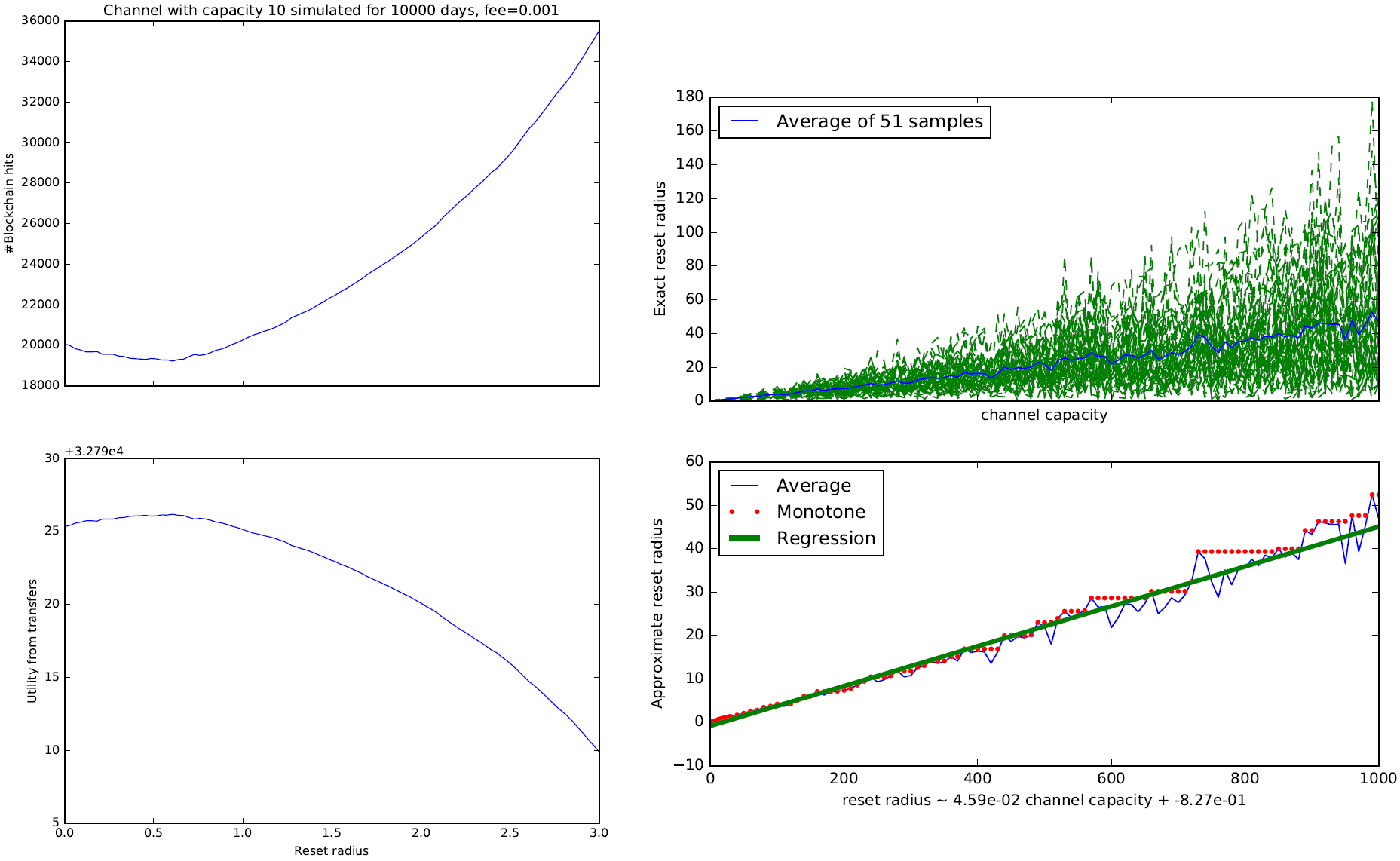}
    \caption{\textcolor{darkbrown}{\em Left}: channel performance as a function of  reset radius.
\textcolor{darkbrown}{\em Right}: optimal reset radius as a function of  channel capacity. For the right figure, the text at the bottom represents the linear regression formula. The linear regression shows that, approximately, $y$ = 4.59e-02 $x$ - 8.27e-01, where $y$ is the reset-radius and $x$ is the channel-capacity.}
\label{fig:sim-reset-radius}
\end{figure}

The first plot shows the number of \emph{blockchain hits} --- the number of blockchain records required for a typical sequence of transfers (this is the sum of the records used for out-of-channel transfers and the records used for reset transactions). The number of blockchain hits is minimized when the reset radius is about $0.5$.
The second plot shows the net utility of the agents, which is their value from making the transfers ($\beta$ times the transfer sizes) minus the fee paid for blockchain hits. The utility is maximized at approximately the same point, $0.5$. 
The third plot shows how many transfers are done on the blockchain and how many are done on the channel, as a function of the reset radius.

We now calculate the optimal reset radius. Naturally, the optimum depends on the capacity $w$ and on the blockchain fee $\phi$. 
For simplicity we calculate and plot the optimal reset radius only as a function of $w$, for a fixed fee of $0.001$. We run 50 experiments, each of which simulates transfers over 1000 days. 
Figure \ref{fig:sim-reset-radius}/Right
shows the average optimal radius, as well as a linear regression line; it shows that the optimal reset radius is approximately 5 percent of the channel capacity.

\subsection{Finding the optimal channel capacity}
From now on, we assume that each channel with capacity $w$ is operated with the optimal reset-radius for this capacity, using the linear regression formula shown in Figure \ref{fig:sim-reset-radius}/Right.
Our next goal is to calculate the optimal channel capacity, given the blockchain fee (and the fixed interest rate $r$).


\begin{figure}
\begin{center}
\includegraphics[scale=0.66]{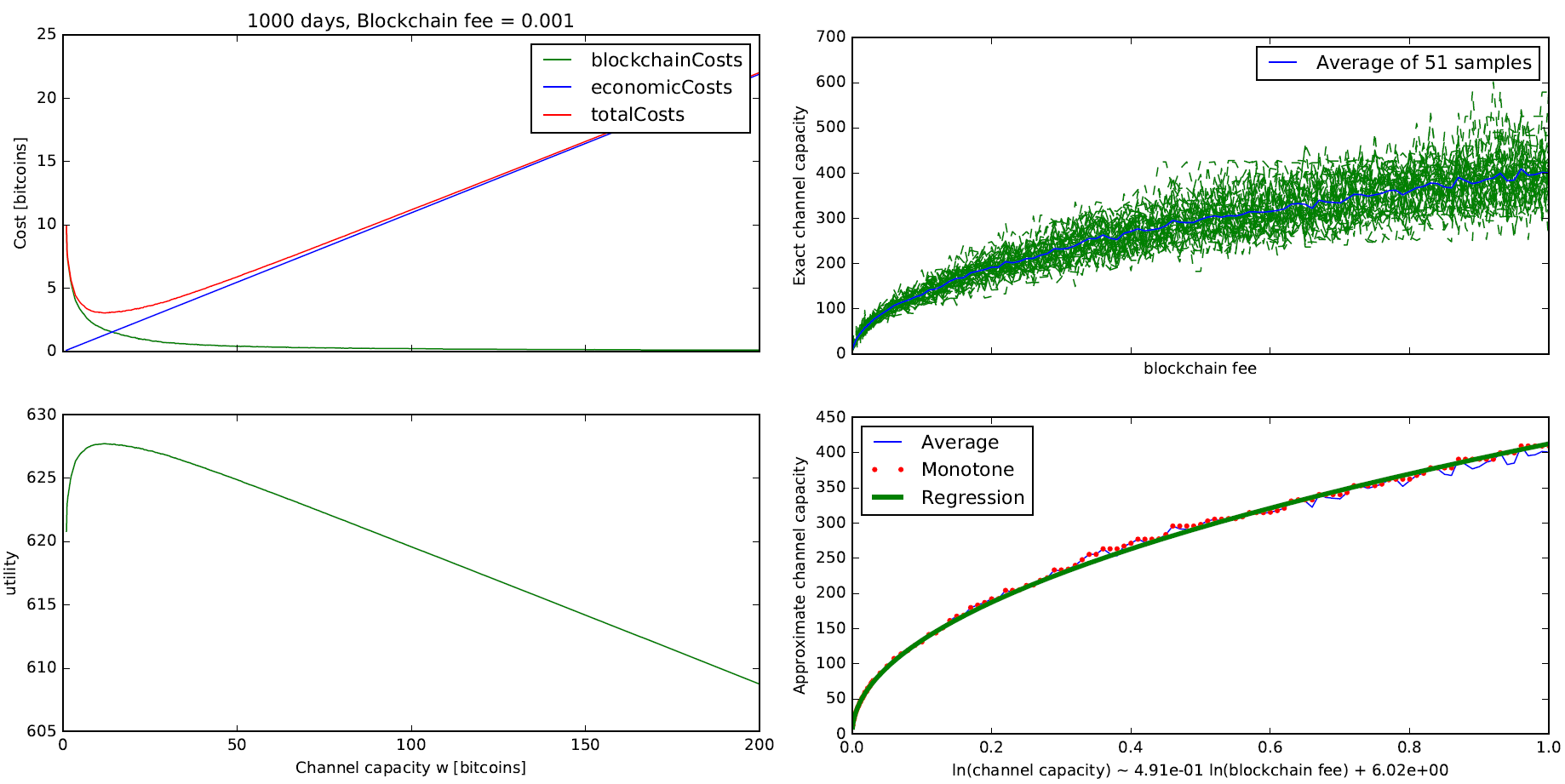}
\end{center}
\caption{\textcolor{darkbrown}{\em Left}: channel performance as a function of the channel capacity.
\textcolor{darkbrown}{\em Right}: optimal channel capacity as a function of the blockchain fee.}
\label{fig:sim-channel-capacity}

\end{figure}

Figure \ref{fig:sim-channel-capacity}/Left shows some plots of the channel performance as a function of the channel capacity, for different blockchain fees. As $w$ increases, the \emph{blockchain cost} (the cost from blockchain transfers and reset transactions) decreases, but the \emph{economic cost} (i.e. the cost from interest) increases. The optimal utility is attained at a capacity in which these two costs are approximately equal.

We now calculate the optimal channel capacity as a function of the blockchain fee. We run 50 experiments, each of which simulates transfers over 1000 days. Figure \ref{fig:sim-channel-capacity}/Right shows the average channel capacity, as well as a curve calculated by log-log regression; it shows that the optimal channel capacity is approximately a constant times $\phi^{1/2}$.
The theoretical analysis in Lemma \ref{lem:optimal-capacity} gives an estimate of  $O(\phi^{1/3})$, which is qualitatively similar (recall that the situations are different since here we randomize the transfer-size each transfer rather than once per pair).

\subsection{Demand curves}
From now on, we assume that each channel is funded with the optimal $w$ given the blockchain fee $\phi$, and operated with the optimal reset-radius given that $w$.

We calculate the curve of demand for blockchain records as follows. 
For each value of $\phi$ we calculate the number of records required to do the randomly-drawn transfers. This calculation is done in two steps: first, we calculate the total cost for doing all these transfers in a lightning channel (the blockchain cost plus the economic cost). We assume that this cost should be covered by setting a fee on each lightning transaction. The lightning fee is proportional to the transfer size and the value of a transfer is also proportional to the transfer size. Thus, if the total cost is smaller than the total value of all transfers, then all transfers will be done on the lightning channel. If the total cost is larger than the total value, then no transfers will be done on the lightning channel. In the latter case, the users will use only  blockchain, and they will do only the transfers with a value above the blockchain fee.

\begin{figure}[h!]
\begin{center}
\includegraphics[scale=0.95]{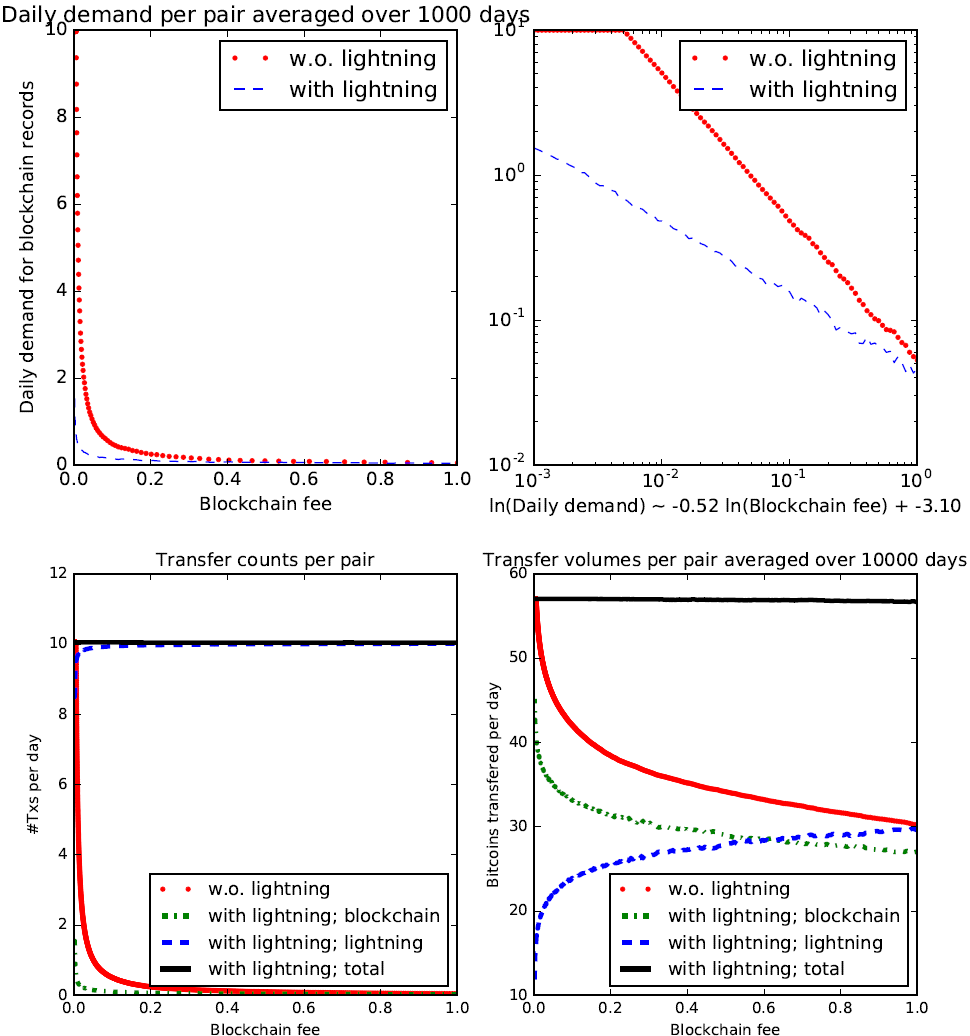}
\end{center}
\caption{
\label{fig:sim-demand-curves}
\textbf{\textcolor{darkbrown}{Top:}} Demand curves with and without lightning.
\emph{Left}: linear-linear plots.
\emph{Right}: log-log plots, and regression formula for the lightning curve.
\\
\textbf{\textcolor{darkbrown}{Bottom}}: transactions with and without lightning.
\emph{Left}: transaction-counts.
\emph{Right}: transaction volumes.
}
\end{figure}

Figure \ref{fig:sim-demand-curves}/Top shows some plots of the daily demand for blockchain records (per pair of users) as a function of the blockchain fee. Log-log regression shows that the demand is with lightning is approximately $\Theta(1/\phi^{1/2})$ and the demand without lightning is approximately $\Theta(1/\phi)$.

Figure \ref{fig:sim-demand-curves}/Bottom shows the division of transactions between lightning and blockchain. Interestingly, while almost all transfers are done on the lightning network (Left), most transfer-volume is done on the blockchain (Right).

\subsection{Equilibrium fee}
Our next goal is to calculate the price-curves, describing the equilibrium blockchain fee as a function of the number of users. Given the number of users $n$, we estimate the equilibrium fee $\phi$ in the following way:
\begin{itemize}
\item Draw a random sequence of transfer-sizes for 1000 days (an average of $1000 \ell = 10000$ transfers).
\item Define the \emph{demand function} $D(\phi)$ as the number of blockchain-hits when the above sequence is performed in a lightning channel with optimal capacity and optimal reset-radius.
\item Define the \emph{surplus function} $s(\phi)$ as the difference between the demand and the supply per day, i.e: $s(\phi) := {n\over 2}\cdot D(\phi)/1000 - \tau\,.$
\item Find $\phi^*$ for which $s(\phi^*)=0$ (we used scipy.optimize.brentq, which is based on a method of \cite{brent}  to find zeroes).
\end{itemize}

We ran the above procedure for 100 values of $n$ between $10^5$ and $10^8$ (logarithmically spaced). For each value, we repeated the random calculation 500 times. We calculated both the average $\phi$ for each $n$ (averaged over the 500 samples), and a regression curve of $\phi(n)$. The results are shown in Figure \ref{fig:sim-price-curves}.

\begin{figure}[h!]
\begin{center}
\includegraphics[scale=0.98]{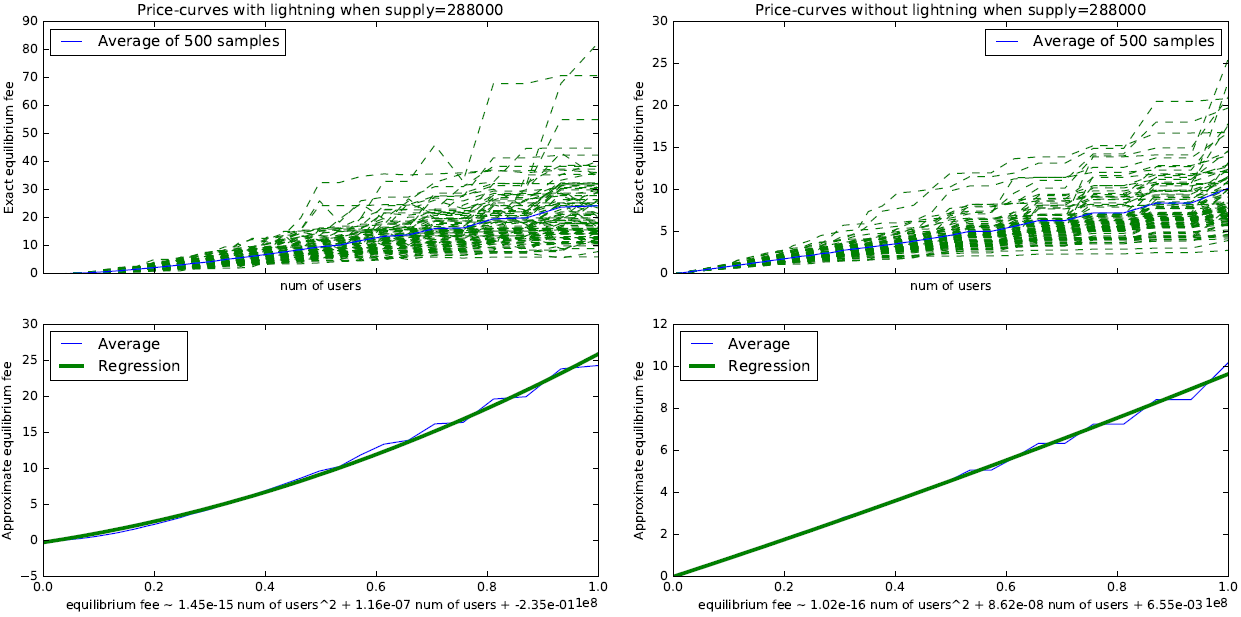}
\end{center}
\caption{
\label{fig:sim-price-curves}
\textbf{\textcolor{darkbrown}{Left}}: Price curves with lightning.
\textbf{\textcolor{darkbrown}{Right}}: Price curves with no lightning.
}
\end{figure}

The blockchain fee grows super-linearly with the number of users. The data exhibits a very good fit with a polynomial of degree 2, which indicates that the blockchain fee grows like $\Theta(n^2)$, where $n$ is the number of users.
Without lightning, the price grows linearly with $n$. 

\subsection{Overall network performance}
Finally, we simulated the overall network performance as a function of the number of users. We simulated random transfers over 100,000 days, and 1000 values of $n$ between $10^5$ and $10^8$ (logarithmically spaced). For each value, we calculated the equilibrium fee with and without lightning, the number and the volume of transfers done in blockchain vs. lightning. 
We did this experiment once for a  block size of 288000 records per day  and another time for a  block size twice as large (of 576000 records per day). The results are shown in Figure \ref{fig:sim-net-stats}.

The results are qualitatively similar with single or double block size. As the number of users grows, the equilibrium fee grows super-linearly (first line) and with it, the miners' revenue (fifth line). However, the utility per user decreases (sixth line). All the supply of records is used (second line). Almost all transfers are done via lightning (third line), but the volume of transfers is split approximately equally between lightning and blockchain (fourth line), since the largest transfers are done on the blockchain.

\begin{figure}[h!]
\begin{center}
\includegraphics[scale=0.85]{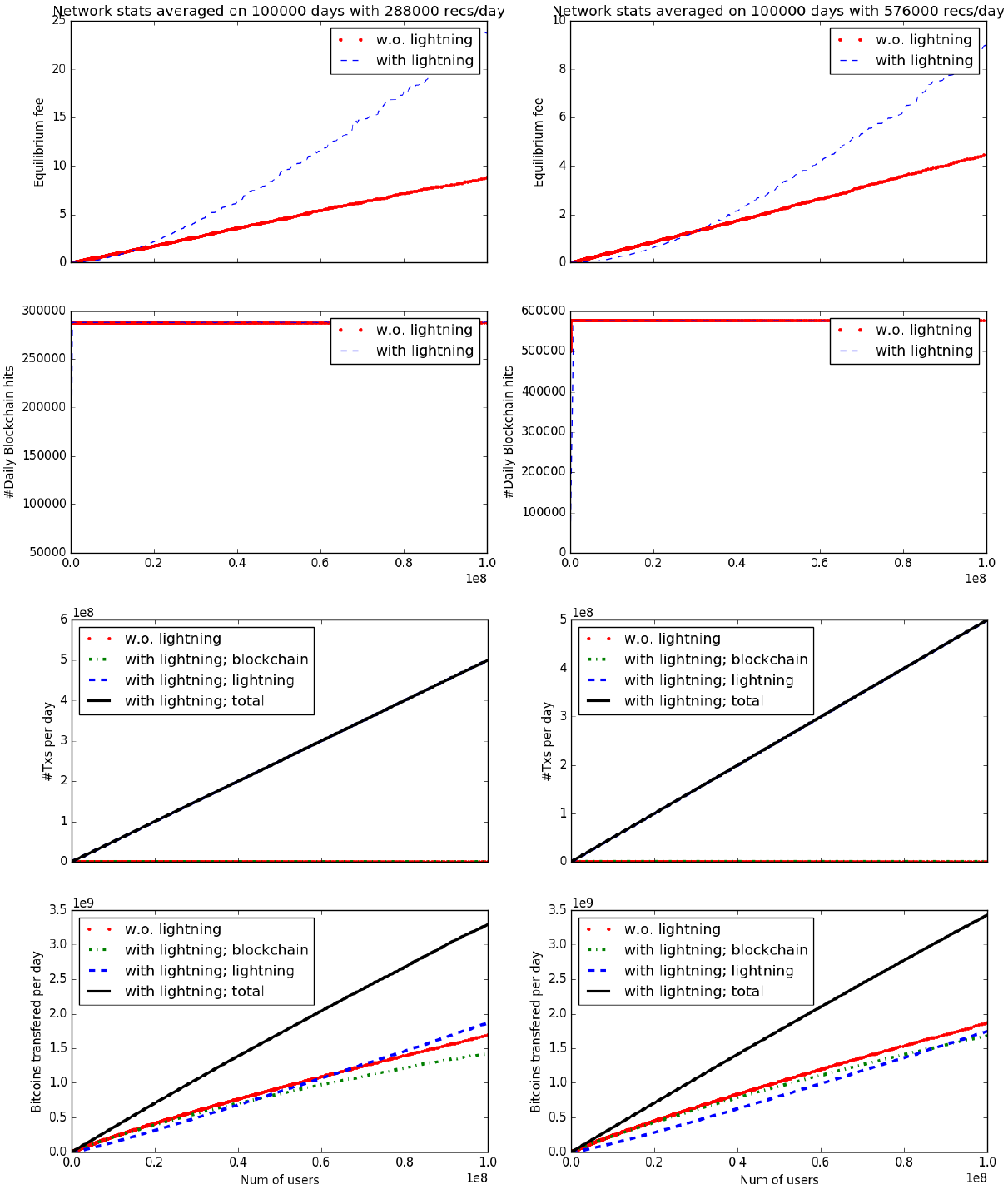}
\end{center}
\caption{
\label{fig:sim-net-stats}
\textbf{\textcolor{darkbrown}{Right}}: Network stats with  block size of 288000 records per day.
\textbf{\textcolor{darkbrown}{Left}}: Network stats with  block size twice as large (576000 records per day).
}
\end{figure}


Doubling the block size (with lightning) has some minor quantitative differences: the blockchain fee is smaller by a factor of about 3, the miners' revenue is smaller by a factor of about 1.5, and the utility per user is higher by a factor of 1.5 (see Figure~\ref{fig:sim-net-stats-utility-and-revenue}).

\begin{figure}[h!]
\centering 
\includegraphics[scale=1.2]{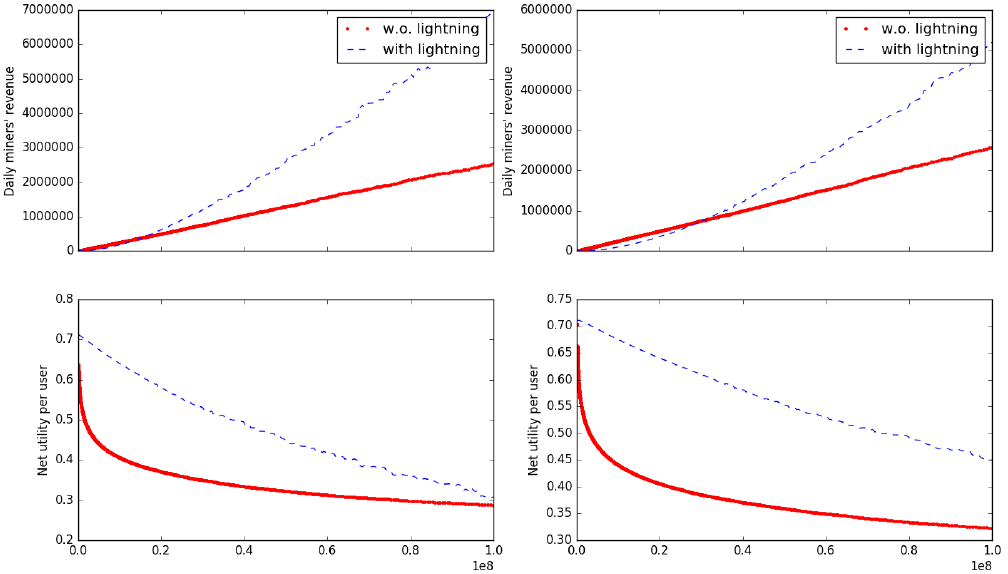}
	\caption{
		\label{fig:sim-net-stats-utility-and-revenue}
		\textbf{\textcolor{darkbrown}{Right}}: Network stats with  block size of 288000 records per day.
		\textbf{\textcolor{darkbrown}{Left}}: Network stats with  block size twice as large (576000 records per day).
	}
\end{figure}


\section{Star Topology}
We now show how to solve networks where there is a central bank that each user is connected to.
For each pair of users $i,j$, let $\lambda_{i,j}$ denote the flow of user $i$ towards user $j$. 
Then we can define $\lambda_{i}^{+} = \sum_{j \in N} \lambda_{i,j}$ and $\lambda_{i}^{-} = \sum_{j \in N} \lambda_{j,i}$. Set $p_i = \frac{\lambda_i^+}{\lambda_i^- + \lambda_i^+}$ and $q_i = \frac{\lambda_i^-}{\lambda_i^- + \lambda_i^+}$. Then $p_i + q_i = 1$, so the expected lifetime of the channel between user $i$ and the bank can be obtained as a corollary from the single channel analysis.

\begin{corollary}[Lifetime of Channel between User and Bank]
Suppose there is a lightning star network, where the transfers from each user $i$ to any other user $j$ are given by a Poisson process with mean $\lambda_{i,j}$ transfers per day, and the channel between user $i$ and the bank has capacity $w$.
Let $\lambda_{i}^{+} = \sum_{j \in N} \lambda_{i,j}$ and $\lambda_{i}^{-} = \sum_{j \in N} \lambda_{j,i}$.
Then, the expected lifetime of the channel 
between user $i$ and the bank from the state where user $i$ has $m$ bitcoins and the bank has $w-m$ bitcoins on the channel with user $i$ is:

(a) when $\lambda_{i}^+ \neq \lambda_i^-$:
	\begin{align*}
	\tilde{X}_m^i = \frac{m}{\lambda_{i}^+-\lambda_{i}^-} - \frac{w}{\lambda_{i}^+-\lambda_{i}^-}  \left( \frac{1 - \left( \frac{\lambda_{i}^+}{\lambda_{i}^-} \right)^m}{1-\left(\frac{\lambda_{i}^+}{\lambda_{i}^-} \right)^w} \right)
	&& \text{[days]}
	\end{align*}
	
(b) when $\lambda_{i}^+ = \lambda_i^-$: 
\begin{align*}
\tilde{X}_m^i = {w m - m^2 \over 2\lambda_i^+} && \text{[days]}\,.
\end{align*}
\end{corollary}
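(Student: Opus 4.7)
The plan is to reduce this corollary to the single-channel analysis already carried out for the two-player (Alice/Bob) setting. The key observation is that in the star topology every transfer involving user $i$ -- whether user $i$ is the sender or the recipient -- is routed through the single channel between user $i$ and the bank. Consequently, the only events that change the balance of that channel are the outgoing transfers from $i$ to some other user (each a Poisson process with rate $\lambda_{i,j}$) and the incoming transfers from some other user to $i$ (each a Poisson process with rate $\lambda_{j,i}$); any transfer between two third parties $j,k$, both distinct from $i$, travels through the bank's channels with $j$ and with $k$ but leaves user $i$'s channel untouched.

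First I would invoke the superposition property of independent Poisson processes to aggregate these streams: outgoing events at user $i$ arrive as a Poisson process with rate $\lambda_i^+ = \sum_{j} \lambda_{i,j}$, incoming events as an independent Poisson process with rate $\lambda_i^- = \sum_{j} \lambda_{j,i}$. From the viewpoint of the user-$i$/bank channel, each outgoing event decrements user $i$'s balance by one coin and each incoming event increments it by one. This is exactly the dynamics of a single two-party channel in which ``Alice'' (playing the role of user $i$) has send-rate $\lambda_i^+$ and ``Bob'' (playing the role of the bank) has send-rate $\lambda_i^-$.

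Once the reduction is in place, the conclusions follow by direct substitution into the previously-derived two-party channel-lifetime formulas. For case (a), when $\lambda_i^+ \neq \lambda_i^-$, I would substitute $\lambda_A := \lambda_i^+$ and $\lambda_B := \lambda_i^-$ into the asymmetric two-party lifetime corollary to obtain the stated expression for $\tilde{X}_m^i$. For case (b), when $\lambda_i^+ = \lambda_i^-$, substituting into the symmetric counterpart (the per-day corollary of Theorem \ref{thm:symmetric_life}, with $\lambda := \lambda_i^+$) yields $\tilde{X}_m^i = (wm - m^2)/(2\lambda_i^+)$.

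The main point that requires care, although I do not expect it to be a genuine obstacle, is verifying that the dynamics of user $i$'s channel really are decoupled from those of the remaining channels. Transfers between two third parties $j,k$ change the balances of the bank's channels with $j$ and with $k$ but do not alter the bank's balance on its channel with $i$, so the one-dimensional random walk on user $i$'s channel state is driven entirely by the two aggregated Poisson processes identified above. The earlier single-channel lifetime results therefore apply verbatim, completing the reduction.
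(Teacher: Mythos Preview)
Your proposal is correct and follows essentially the same approach as the paper: the paper simply observes that aggregating the flows gives $p_i = \lambda_i^+/(\lambda_i^+ + \lambda_i^-)$ and $q_i = \lambda_i^-/(\lambda_i^+ + \lambda_i^-)$ with $p_i+q_i=1$, so the result is an immediate corollary of the single-channel analysis. Your write-up is in fact more careful than the paper's, since you make explicit the Poisson superposition step and the decoupling of user $i$'s channel from transfers between third parties, points the paper leaves implicit.
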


\begin{corollary}
The optimal initialization of a star network with channels of capacity $w$ is such that for each user $i$
\begin{itemize}
	\item If $\lambda_i^+ = \lambda_i^-$, both the bank and user $i$ start with $w/2$ bitcoins. The expected lifetime of user $i$'s channel with the bank when initialized this way is $T_i(w) = {w^2 \over 8\lambda_i^+}$.
	\item If $\lambda_i^+ > \lambda_i^-$, user $i$ starts with $w$ bitcoins and the bank with $0$ bitcoins, while if $\lambda_i^+ < \lambda_i^-$ user $i$ starts with $0$ bitcoins and the bank with $w$ bitcoins. The approximate expected lifetime of user $i$'s channel is $\tilde{T}_i(w) = {w\over  |\lambda_i^+- \lambda_i^-|}$.
\end{itemize}
\end{corollary}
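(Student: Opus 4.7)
The plan is to observe that in a star topology the lifetime analysis decomposes into $n$ independent single-channel problems, one for each user--bank channel, and then to invoke the single-channel results already established earlier in the paper.

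First I would justify the decomposition. Since every transfer is routed through the bank, a transfer from user $i$ to user $j$ consumes exactly one unit of liquidity on the channel between $i$ and the bank (in the direction away from $i$) and exactly one unit on the channel between the bank and $j$ (in the direction toward $j$); transfers between pairs not involving $i$ do not touch user $i$'s channel with the bank at all. Hence the balance on user $i$'s channel evolves as a random walk driven only by transfers whose source or destination is $i$. Next I would apply the superposition property of independent Poisson processes to conclude that the aggregate outflow from $i$ into that channel is Poisson with rate $\lambda_i^+ = \sum_j \lambda_{i,j}$ and the aggregate inflow is Poisson with rate $\lambda_i^- = \sum_j \lambda_{j,i}$. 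Consequently the channel between user $i$ and the bank is statistically identical to the single two-party channel studied in the previous section, with Alice's role played by user $i$ (rate $\lambda_i^+$) and Bob's role played by the bank (rate $\lambda_i^-$).

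With this reduction in place, both parts of the corollary follow immediately from the single-channel lemmas. In the symmetric subcase $\lambda_i^+ = \lambda_i^-$, the lifetime formula $\tilde{X}_m^i = (wm - m^2)/(2\lambda_i^+)$ is a downward parabola in $m$ maximized at $m = w/2$, yielding $T_i(w) = w^2/(8\lambda_i^+)$, so splitting the capacity evenly between user and bank is optimal. In the asymmetric subcase, the linear approximation $\tilde{X}_m^i = m/|\lambda_i^+ - \lambda_i^-|$ is monotone in $m$: when $\lambda_i^+ > \lambda_i^-$ the net drift of user $i$'s balance is downward, so lifetime is maximized by initializing with $m = w$ (user holds all the funds); symmetrically, when $\lambda_i^- > \lambda_i^+$ the drift is upward and the optimum is $m = 0$. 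In both asymmetric cases the resulting lifetime is $\tilde{T}_i(w) = w/|\lambda_i^+ - \lambda_i^-|$.

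The only substantive step is the decoupling argument in the first paragraph; once one accepts that the bank acts as a passive relay and that each user--bank channel sees an aggregated Poisson process with the stated rates, the optimization collapses to a one-dimensional calculus exercise that has already been carried out. I do not expect any genuine obstacle here, only the need to state the reduction carefully so that the reader sees why the lifetimes of different user--bank channels can legitimately be optimized separately rather than jointly.
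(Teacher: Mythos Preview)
Your proposal is correct and follows essentially the same approach as the paper: reduce each user--bank channel to an independent two-party channel with aggregated Poisson rates $\lambda_i^+,\lambda_i^-$ (the paper does this in the immediately preceding corollary), and then invoke the single-channel optimal-initialization lemmas from Section~3. The paper in fact states this corollary without proof, so your explicit justification of the decoupling and the subsequent one-variable optimizations is exactly what the paper leaves implicit.
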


\medskip

Next we calculate the optimal channel funding for the star topology, finding that the fee per transaction for any user in the star network is twice as high compared to the pairs topology.

\medskip

\begin{theorem} \label{lem:expected_fee_star}
	Consider a star network with users $1 \ldots n$ connected through a bank, such that the transfer from any user $i$ to another user $j$ is a Poisson process with mean $\lambda_{i,j}$,
	with $\ell_i = \sum_{k \in N} \lambda_{i,k} + \lambda_{k,i}$.
	Let $\phi$ be the blockchain mining fee and $r$ the daily interest rate for locking money in a lightning channel. 
	Each transfer between any pair of users is of $z$ bitcoins, and each user $i$ has a channel with the bank of capacity $w_i \cdot z$ bitcoins.
	
	Then the expected cost of any transfer that a user $i$ is involved in is:
	\begin{align*}
		{F(w_i) = \frac{w_i z \cdot (1+r)^{T(w_i)} +a \phi - w_i z}{T(w_i)\cdot \ell_i}} && \text{[bitcoins/transfer]}
	\end{align*}
	where $T(w)$ is the expected lifetime (in days) of a channel of capacity $w_i$.
		The expected fee paid by each user $i$ is twice as high in the star network compared with the model where user $i$ transacts only with a fixed user $j$ without a bank. 
\end{theorem}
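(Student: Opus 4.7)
The plan is to prove the two parts of the theorem in sequence. The first part -- the formula for $F(w_i)$ -- is essentially immediate from Lemma \ref{lem:expected_fee}. The key observation is that the channel between user $i$ and the bank carries \emph{every} transfer in which $i$ participates, whether as sender or as receiver, so its effective transfer-rate is exactly $\ell_i = \sum_{k}(\lambda_{i,k}+\lambda_{k,i})$. Since the interest cost on the locked funds, the reset cost $a\phi$, and the expected number of transactions $T(w_i)\cdot \ell_i$ all have the same structural form as in the two-party setting, Lemma \ref{lem:expected_fee} applies verbatim with $\ell$ replaced by $\ell_i$ and $w$ replaced by $w_i$, yielding the claimed expression.

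For the ``twice as high'' claim, the central observation is a routing fact: every transfer from user $i$ to user $j$ in the star topology traverses \emph{two} channels, namely user $i$'s channel to the bank and user $j$'s channel to the bank. Each of these channels must independently amortize its own maintenance cost (interest plus reset fees), so the total fee on a single star-network transfer is $F(w_i)+F(w_j)$, whereas the analogous transfer in the pairs topology pays only the single-channel fee $F(w)$. Under symmetric conditions -- identical flows $\ell_i=\ell_j$ and identical optimal capacities $w_i=w_j=w$, which are inherited from the optimization of the previous section -- both summands coincide with $F(w)$, giving a total of $2F(w)$, which is the promised factor of two.

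The main subtlety I would have to handle carefully is pinning down the exact ``apples-to-apples'' comparison. In the pairs model the channel-flow is the rate between a single pair, whereas in the star model the channel-flow $\ell_i$ aggregates over \emph{all} of $i$'s partners, so the two optimization problems are not literally identical unless symmetry is imposed. The cleanest route is to restrict attention to a symmetric setting where every user has the same total flow and hence the same optimal capacity; then the factor of two arises purely from the two-hop routing through the bank, which doubles the number of channels that any single transfer must pay to use. Beyond this bookkeeping I do not anticipate any real obstacle, since the remaining manipulations are routine substitutions into the formula already obtained in the first part.
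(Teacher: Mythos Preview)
Your derivation of the formula $F(w_i)$ matches the paper's exactly: both simply invoke Lemma~\ref{lem:expected_fee} after observing that the channel between user~$i$ and the bank carries all of~$i$'s traffic at total rate~$\ell_i$.

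For the factor-of-two claim, however, your argument is genuinely different from the paper's. You attribute the doubling to \emph{routing}: a transfer $i\to j$ in the star topology traverses two channels and hence incurs $F(w_i)+F(w_j)\approx 2F(w)$, versus a single $F(w)$ in the pairs model. The paper instead attributes the doubling to \emph{cost-sharing}: the per-transfer cost on user~$i$'s channel is $F(w_i)$ in both models, but in the pairs model this cost is split with the partner~$j$, whereas in the star model the bank ``does not support any cost that user~$i$ incurs'' and so~$i$ must bear the full $F(w_i)$ alone. Both accountings are consistent and yield the same factor of two under the symmetric assumptions you correctly flag; your routing-based view makes the two-hop geometry explicit, while the paper's view emphasizes that the bank is a pure intermediary contributing no capital or fees. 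One small gap in your write-up: you establish that the \emph{total} fee per transfer doubles, but the theorem speaks of the fee \emph{paid by user~$i$}; you should add the one-line observation that under any symmetric allocation rule (each user covers their own channel, or sender pays, or even split) the per-user fee inherits the same factor. This is the ``bookkeeping'' you allude to, and it is indeed routine.
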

\begin{proof}
	This follows from the single channel analysis (Lemma \ref{lem:expected_fee}), by noting that the expected (total) cost of a transaction that passes through user $i$'s channel with the bank is $$\frac{w_i z \cdot (1+r)^{T(w_i)} +a \phi - w_i z}{T(w_i)\cdot \ell_i}.$$ The bank does not support any cost that user $i$ incurs, so the fee for user $i$ is twice as high compared to the case where $i$ has transactions with only one other user $j$ (with no intermediary).
\end{proof}

The analysis from the pairs topology applies here too, except that the cost of using the lightning network for each user is multiplied by a factor of two.

\section{Discussion}

Instead of Poisson transfers, we can assume that there is a fixed sequence of transfers: $(i_1,j_1)$, $(i_2,j_2),\ldots$, where transfer number $k$ is from user $i_k$ to user $j_k$. 
Each user $i$ has a fixed initial budget $K_i$.
 The sequence is feasible, i.e, at time $k$, user $i_k$ always has a sufficient amount of money for transferring to user $j_k$. 
 We assume that at time $k$, user $i$ considers only two options: blockchain transfer or bitcoin transfer, and selects the cheaper option.
There is an interesting algorithmic question: given a fixed sequence of transfers, what is the optimal configuration of lightning channels?

\section{Acknowledgments}

%
The symbolic calculations were done with Sympy \cite{sympy}, the 
plots with Desmos \cite{desmos} 
and 
Matplotlib \cite{matplotlib}.

\nocite{*}

\bibliographystyle{alpha}
\bibliography{bitcoin}

\appendix

\section{Numeric examples} 
\label{sub:numeric-example}
In this section we illustrate our approach on an example with realistic numbers and observe the differences between a world with lightning and one without.

For the examples below we assume the following parameter values.
\begin{itemize}
\item Between each pair, there are $\ell = 10$ transfers per day. 
\if\asymmetric1
\item When the pair is asymmetric, we assume that $\Delta = 6$ per day, i.e, in average, Alice makes $8$ transfers to Bob and Bob makes $2$ transfers to Alice.
\else
\item The transfer-rate is symmetric, i.e, each of Alice and Bob makes an average of 5 transfers per day.
\fi
\item $\beta = 0.01$, i.e, the utility of a user from a money-transfer is $1\%$ of the transfer-size.
\item The interest rate is $4\%$ per year. This implies that $r \approx 4/365/100 \approx 0.0001096$ per day.
\item The size of a channel-reset transaction is $1.1$ times the size of a usual transaction, i.e, $a=1.1$ [records].
\item The minimum transfer-size is $\zmin=0.001$ [bitcoins]:
\end{itemize}
These parameters give the following threshold values of $z$
\if\asymmetric1
\begin{align*}
\text{asymmetric case:} 
&&
\text{symmetric case:} 
\\
\tNL \approx 0.29 \phi 
&&
\tNL \approx 0.0036 \phi 
\\
\tNB = 100 \phi
&&
\tNB = 100 \phi
\\
\tLB \approx 34564 \phi 
&&
\tLB \approx 16744 \phi 
\end{align*}
\else
(recall we are assuming symmetric channels):
\begin{itemize}
\item $\tNL = \left({27 a r^2 \over \ell^2\beta^3} \right) \cdot \phi \approx 0.0036 \phi $
\item $\tNB = \left( {1\over \beta}\right) \cdot \phi = 100 \phi$
\item $\tLB = \left( {\ell\over \sqrt{27 a r^2}}\right) \cdot  \phi \approx 16744 \phi$
\end{itemize}
\fi
So, without lightning, users use blockchain iff their transfer-size is above $100\phi$; with lightning, in the symmetric case:
\begin{itemize}
\item  Users whose transfer-size is less than $0.0036 \phi$, do not transfer at all;
\item Users whose transfer-size is between $0.0036 \phi$ and $16744 \phi$ use lightning; 
\item Users whose transfer-size is above $16744 \phi$, use direct blockchain transfers.
\end{itemize}

This has several interesting implications:
\begin{itemize}
\item In a world with lightning, whenever $\phi$ is even slightly above zero (e.g. $\phi > 10^{-7}$) almost all users user lightning --- the number of direct blockchain transfers is negligible.
\item 
Since  $\tNL\ll \tLB$,
the demand function for the range of high $\phi$ (e.g. $\phi>0.3$) is dominated by the factor:
\begin{align*}
3\zmin\sqrt[3]{\ell a r^2\over \phi^2} \tNL^{-1/3}
=
3\zmin\sqrt[3]{\ell^3 \beta^3\over 27\phi^3}
=
\ell{\zmin \over \tNB}
\end{align*}
which is the demand without lightning. So, for this $\phi$ we expect the demand curves with and without lightning to be almost overlapping.
\end{itemize}

\if\uniform1
\subsubsection{Uniformly-distributed transfer size}
Suppose $z$ is distributed uniformly between $0$ and $\zmax$, where $\zmax = 1$ [bitcoins]. Then:

Without lightning, the expected demand per pair is:
\begin{align*}
D^{0}(\phi) &=
\begin{cases} \ell - \frac{\ell \phi}{\beta z_{\max}} & \text{for}\: \phi < \beta z_{\max} \\0 & \text{otherwise} 
\end{cases}
\end{align*}

With lightning, in the asymmetric case, the expected demand per pair is:
\begin{align*}
D^a(\phi) =
\begin{cases} - \frac{\ell^{3} \phi}{6 \Delta a z_{\max} r} + \ell - \frac{16 \Delta^{2} a^{2} \phi r^{2}}{3 \ell^{3} \beta^{3} z_{\max}} & \text{for}\: \phi < \frac{4 \Delta}{\ell^{2}} a z_{\max} r \\\frac{2 \sqrt{\Delta} \sqrt{a} \sqrt{z_{\max}}}{3 \sqrt{\phi}} \sqrt{r} - \frac{16 \Delta^{2} a^{2} \phi r^{2}}{3 \ell^{3} \beta^{3} z_{\max}} & \text{for}\: \phi < \frac{\ell^{2} \beta^{2} z_{\max}}{4 \Delta a r} \\0 & \text{otherwise} \end{cases}
\end{align*}

In the symmetric case, the expected demand per pair is:
\begin{align*}
D^s(\phi) =
\begin{cases} - \frac{4 \sqrt{3} \ell^{2} \phi}{45 \sqrt{a} z_{\max} r} + \ell - \frac{729 a^{2} \phi r^{4}}{5 \ell^{3} \beta^{5} z_{\max}} & \text{for}\: \phi < \frac{3 z_{\max}}{\ell} \sqrt{3} \sqrt{a} r \\\frac{3 \sqrt[3]{\ell} \sqrt[3]{a} z_{\max}^{\frac{2}{3}}}{5 \phi^{\frac{2}{3}}} r^{\frac{2}{3}} - \frac{729 a^{2} \phi r^{4}}{5 \ell^{3} \beta^{5} z_{\max}} & \text{for}\: \phi < \frac{\ell^{2} \beta^{3} z_{\max}}{27 a r^{2}} \\0 & \text{otherwise} \end{cases}
\end{align*}
The demand curves for the above parameters are shown in Figure \ref{fig:demand-curves}.
Initially, the demand with lightning drops fast, but then it remains positive while the demand without lightning drops to zero.

\begin{figure}
\begin{center}
\includegraphics[scale=1]{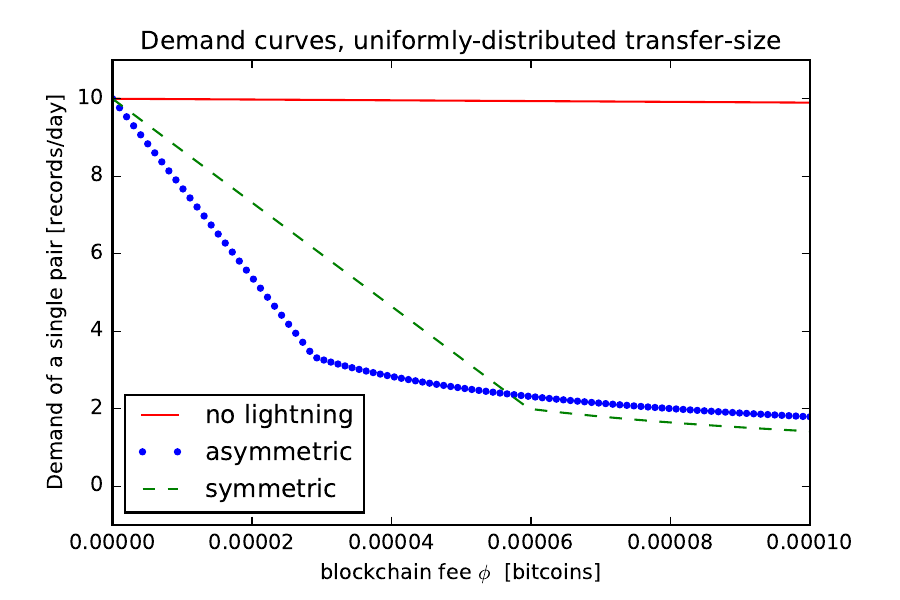}
\hskip 0.08 \textwidth
\includegraphics[scale=1]{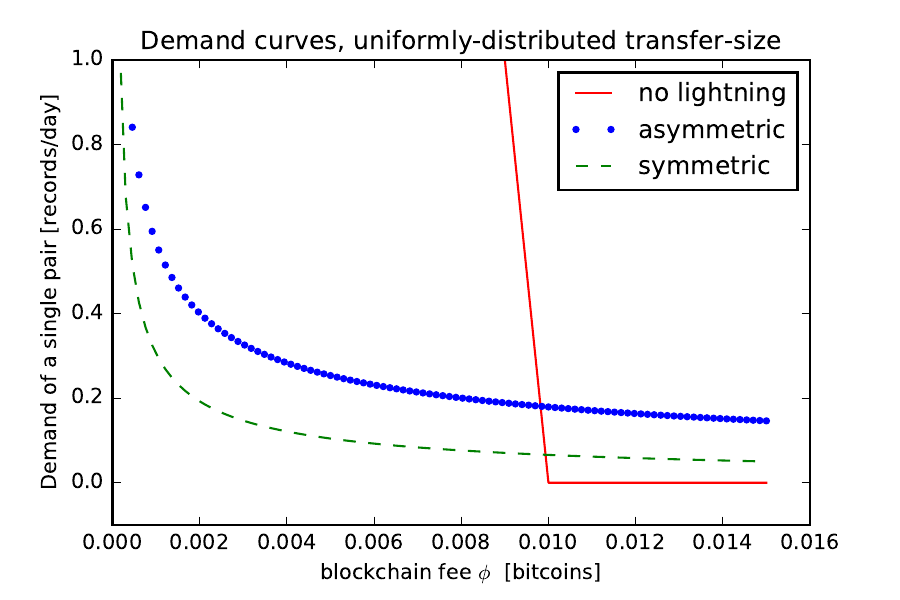}
\end{center}
\caption{
\label{fig:demand-curves}
Demand curves when $z$ is distributed uniformly. 
\\
The two panes show two different regions of the same curves.
\\
\textbf{Left:} price in $[0,0.0001]$.
The demand with lightning (both in the symmetric and asymmetric case) is much lower than the demand without lightning.
\\
\textbf{Right:} price in $[0,0.015]$.
When the blockchain fee increases, the demand without lightning quickly goes to 0 while the demand with lightning remains positive.
}
\end{figure}

Based on the demand curves, we calculate the market-equilibrium price as a function of $n$ --- the number of pairs (so the number of users is $2 n$).

Without lightning, the equilibrium price is:
\begin{align*}
\phi_{eq}^{0} = 
\begin{cases}
\beta \zmax (1 - {\tau\over n \ell})
& n \ell > \tau
\\
0 & n \ell \leq  \tau
\end{cases}
\end{align*}
With lightning, the equilibrium price is much harder to compute. I could compute it only in the asymmetric case, and even in that case, the expression is very complicated. It is plotted in Figure \ref{fig:price-curves}.

Interestingly, while without lightning the price converges at some point (to $\beta \zmax$ --- the maximum value of a transfer), with lightning the price increases indefinitely. This is because with lightning the demand is positive even when the price is very high.
\\
To calculate the total daily revenue of the blockchain miners, the market-equilibrium price $\phi$ should be multiplied by $\tau$, the number of records mined per day. Currently $\tau = 288,000$, so the miners' revenue converges at about $2900$ bitcoins/day without lightning and increases much higher with lightning.

\begin{figure}
\begin{center}
\includegraphics[scale=1]{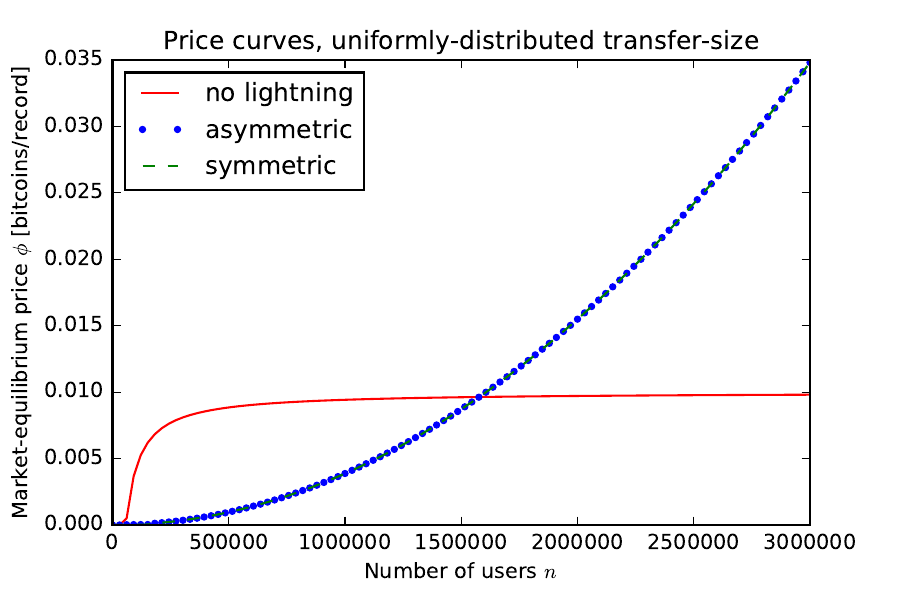}
\end{center}
\caption{
\label{fig:price-curves-uniform}
Price curves when the transfer-size is distributed uniformly.
}
\end{figure}

Finally, Figure \ref{fig:txs-curves-uniform}
shows the effect of the lightning network on the total number of transactions in the market. While without lightning the total number of transactions is upper-bounded by $\tau$ (horizontal solid line), with lightning the number of transactions can increase much higher. As the number of users increases, the blockchain-fee increases, so more and more users switch to using lightning. There is an interesting discontinuity in the graph, which we plan to investigate in future work.

\begin{figure}
\begin{center}
\includegraphics[scale=1]{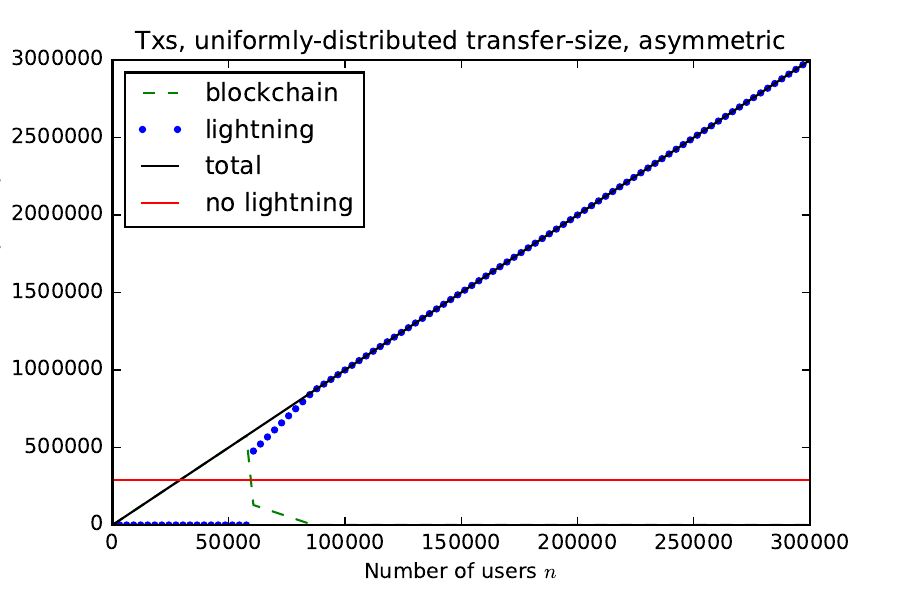}
~~
\includegraphics[scale=1]{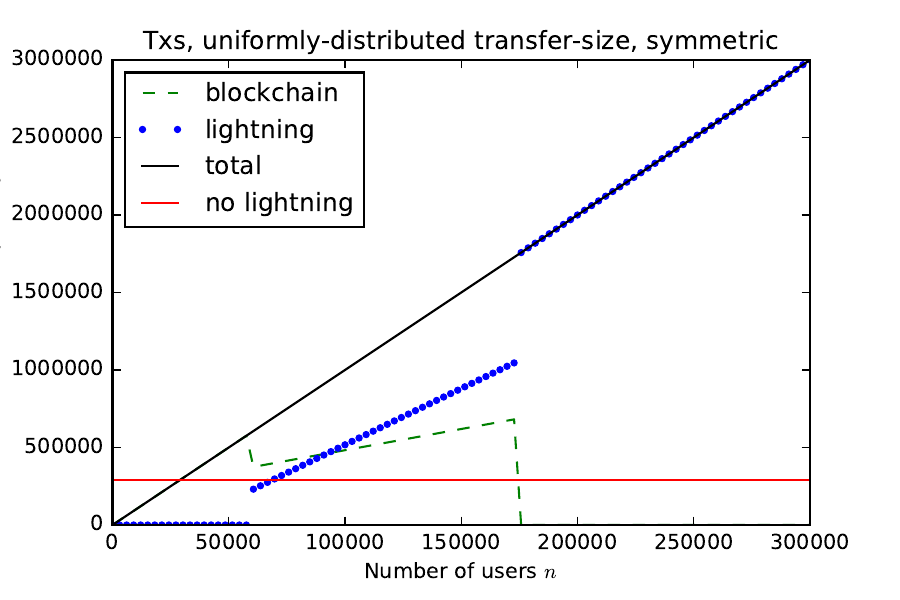}
\end{center}
\caption{
\label{fig:txs-curves-uniform}
Numbers of transactions when the transfer-size  is distributed uniformly.
}
\end{figure}
\fi 
\subsubsection{Demand curves}
Substituting in Corollary \ref{cor:demand-powerlaw}, the demand functions are:
\begin{align*}
D^{0}(\phi) &=
\begin{cases}
10 & [\phi < 10^{-5}]
\\
{10^{-4}\over\phi} & [10^{-5}\leq \phi]
\end{cases}
&& \text{(without lightning)}
\\
D^s(\phi) &= 
\begin{cases}
10 
& [\phi < 6\cdot 10^{-8}]
\\
{1.5\cdot 10^{-5}}
\left({10\over\phi^{2/3}} - {0.04\over\phi}\right) + {6\cdot 10^{-7}\over\phi} 
& [6\cdot 10^{-8} \leq \phi < 0.3]
\\
{1.5\cdot 10^{-5}}
\left({6.5\over\phi} - {0.04\over\phi}\right) + {6\cdot 10^{-7}\over\phi} 
& [0.3 \leq \phi]
\end{cases}
&& \text{(with lightning)}
\end{align*}
The transactions counts are: 
\begin{align*}
\begin{cases}
Lightning: 0, ~ ~ 
Blockchain: 10 
& [\phi < 6\cdot 10^{-8}]
\\
Lightning: \ell \left(1-{\zmin\over{\tLB}}\right), ~ ~ 
Blockchain: \ell \left({\zmin\over{\tLB}}\right)
& [6\cdot 10^{-8} \leq \phi < 0.3]
\\
Lightning: \ell \left({\zmin\over\tNL}-{\zmin\over{\tLB}}\right), ~ ~ 
Blockchain: \ell \left({\zmin\over{\tLB}}\right)
& [0.3 \leq \phi]
\end{cases}
\end{align*}


The demand curves for the above parameters are shown in Figure \ref{fig:demand-curves-powerlaw}.

\begin{figure}[h!]
\begin{center}
\fbox{
\includegraphics[scale=0.5]{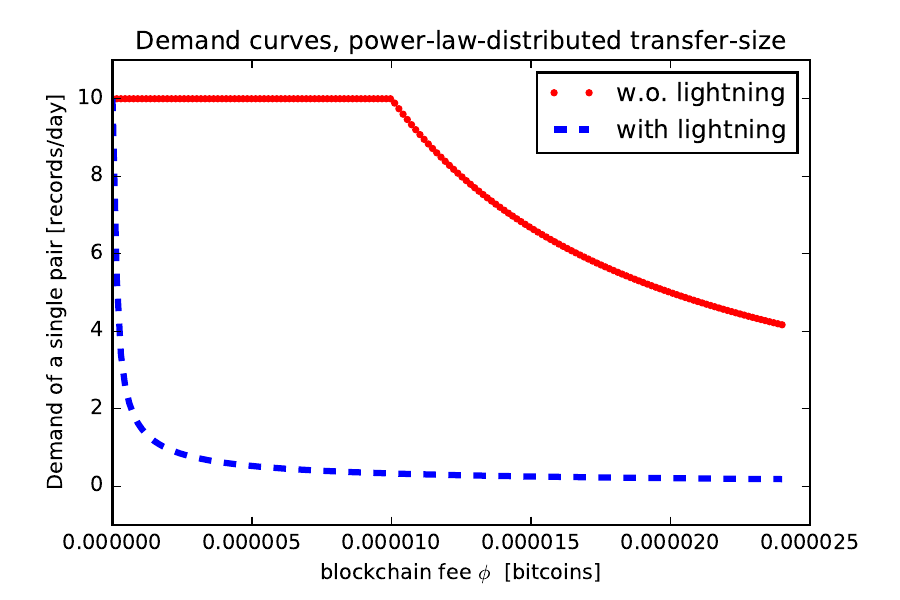}
~~
\includegraphics[scale=0.5]{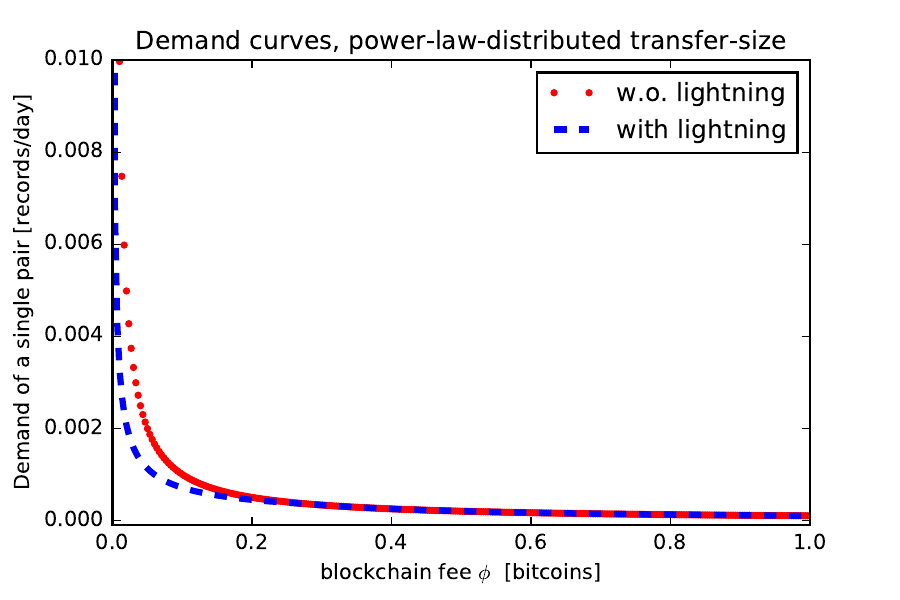}
}
\\
\fbox{
\includegraphics[scale=0.5]{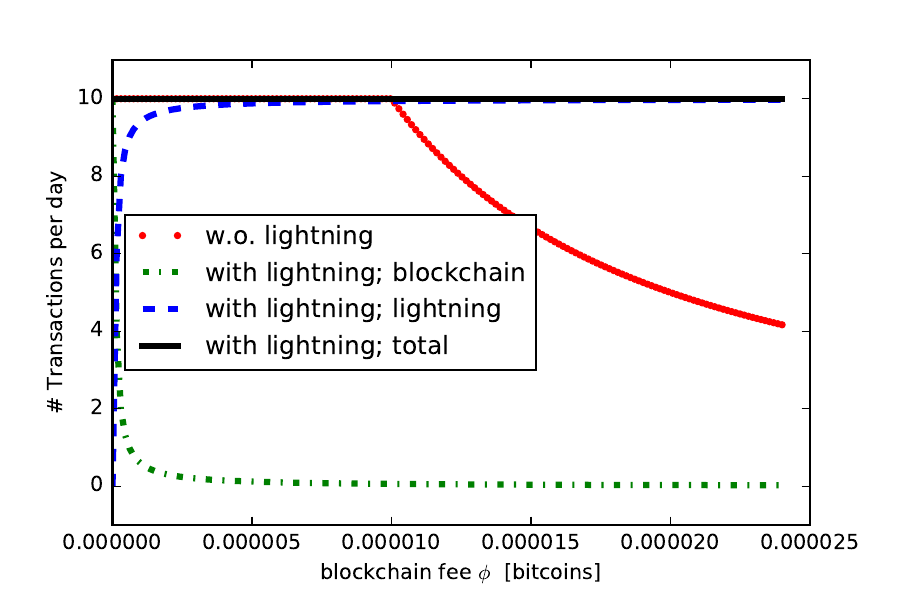}
~~
\includegraphics[scale=0.5]{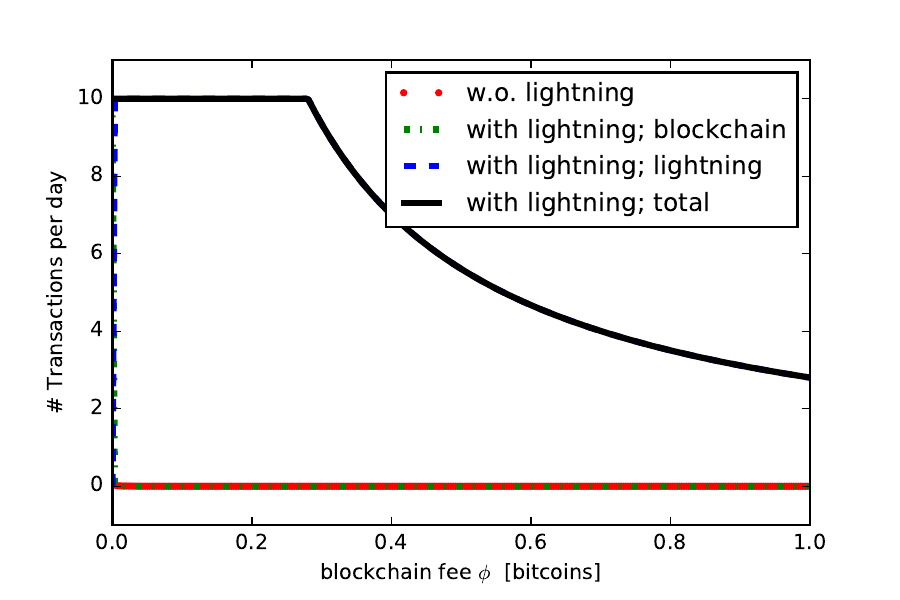}
}
\end{center}
\caption{
\label{fig:demand-curves-powerlaw}
Demand curves when $z$ has  
power-law distribution and the transfer-rate is symmetric.
\\
\textbf{Top:} Records-per-pair in a world with and without lightning (two different price-ranges).
\\
\textbf{Bottom:} Transfers-per-pair in a world with and without lightning  (two different price-ranges).
}
\end{figure}

As $\phi$ increases, we notice several interesting regions in the curves in Figure \ref{fig:demand-curves-powerlaw}:
\begin{itemize}
\item 
When $\phi$ is very low, all transfers above $\zmin$ are profitable, so all demand curves start at $\ell=10$ (the maximum demand per day).
In a world with lightning, a small increase in $\phi$ induces many users to prefer a lightning channel, so the demand for blockchain transfers declines quickly.
In a world without lightning, initially the demand remains high since users have no alternative, but then starts to decline as $\phi$ becomes too high for blockchain transfers (Left).
\item Both curves continue to decline with the price. In a world with lightning, almost all decline in demand comes from a decline in direct blockchain transfers; the number of lightning transfers remains at its maximum ($\ell$) even with $\phi$ becomes quite high. But when $\phi$ becomes very high ($\approx 0.3$), even the sparse channel-resets become too expensive. The two demand-curves meet and the number of lightning transfers starts to decline too (Right).
\end{itemize}

\subsubsection{Price curves}
\label{sub:price-curves}
Based on the demand curves and Theorem \ref{lem:equilibrium-fee}, we calculate the market-equilibrium price as a function of $n$ --- the number of agents. Note that the number of pairs is $n/2$.

The price is 0 as long as the maximum demand $\ell n/2$ is less than the fixed supply $\tau$. When the maximum demand hits the supply, the price starts to increase and the demand decreases.

Without lightning, the equilibrium price is given by:
\begin{align*}
\begin{cases}
{n\over 2} \ell\zmin {\beta \over \phi} = \tau
& [{\phi \over  \beta} < \zmin]
\end{cases}
\end{align*}
which leads to the following price-curve:
\begin{align*}
\phi^0_{eq} = 
\begin{cases}
0 & \text{for~}~ n /2 < \tau / \ell
\\
\zmin
\cdot 
\beta 
\ell 
\cdot
{n \over 2\tau}
& \text{otherwise}
\end{cases}
\end{align*}
Here, when the maximum demand hits the supply, the price jumps to $\beta \zmin$ (see Figure \ref{fig:price-curves-powerlaw}, top left); the discontinuity in the price-curve comes from the discontinuity in the power-law distribution. The equilibrium price then increases linearly with $n$ to keep the demand and supply equal (see Figure \ref{fig:price-curves-powerlaw}, top right).

With lightning, 
\if\asymmetric1
in the asymmetric case the equilibrium price is:
\begin{align*}
\phi^a_{eq} =
\begin{cases} 
0 & \text{for}\: \tau > \ell n 
\\\frac{4 \Delta}{\tau^{2}} a z_{\min} n^{2} r & \text{for}\: \tau > \frac{4 \Delta a n}{\ell \beta} r \\\frac{\ell \beta}{\tau} z_{\min} n & \text{otherwise} \end{cases}
\end{align*}
\fi
\if\symmetric1
in the symmetric case the equilibrium price is:
\begin{align*}
\phi^s_{eq} =
\begin{cases}
0
&
\text{when~}~ n /2 < \tau / \ell
\\
\zmin \cdot \sqrt{27 \ell a} \cdot r \cdot 
\left({n\over 2\tau}\right)^{3/2}
&
\text{when~}~ 
\tau/\ell < n/2 < {\beta^2 \ell \tau \over 27 a r^2}
\\
\zmin \cdot (\beta\ell - \sqrt{27 a} r)\cdot {n\over 2\tau}
&
\text{otherwise}
\end{cases}
\end{align*}
\fi
Here, when the maximum demand hits the supply,
the price jumps to $\zmin \sqrt{27 a r^2\over \ell^2}$, which is in general much lower than $\beta \zmin$ (see Figure \ref{fig:price-curves-powerlaw}, top left); even a small price-increase is sufficient to induce users to use lightning, thus reducing the demand for blockchain records. 
The price-increase is initially super-linear ($\Theta(n^{3/2})$, but eventually $\phi$ reaches its critical point where even channel-resets are too expensive ($\phi\approx 0.3$). At this point, the price starts to increase linearly with $n$ and the two price-curves coincide (see Figure \ref{fig:price-curves-powerlaw}, top right).

The price with lightning network, hence also the miners' revenue, is always below the price without lightning.

Below each price-curve 
we show the total number of transactions in the market and how they split between lightning and the blockchain:
\begin{itemize}
\item When the number of users is small. In a world without lightning, the number of transactions quickly attains its maximum of $\tau$. In a world with lightning, while the number of blockchain transactions decreases, a small number of such transactions can support a much larger number of lightning transactions (Left). 
\item When the number of users grows, almost all transactions are done in lightning (Right). Eventually, all $\tau$ transactions per day are used for reset transactions and the lightning network attains its maximum capacity, which is about $8e9$ --- more than 2 million times the blockchain capacity.
\end{itemize}

\begin{figure}[h!]
\begin{center}
\if\asymmetric1
\fbox{
\includegraphics[scale=1]{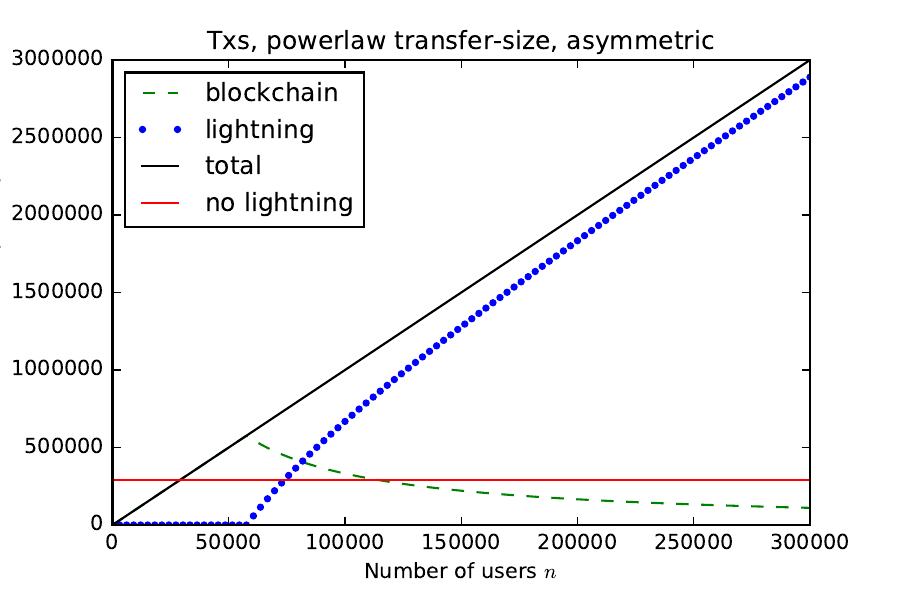}
}
~~
\fbox{
\includegraphics[scale=1]{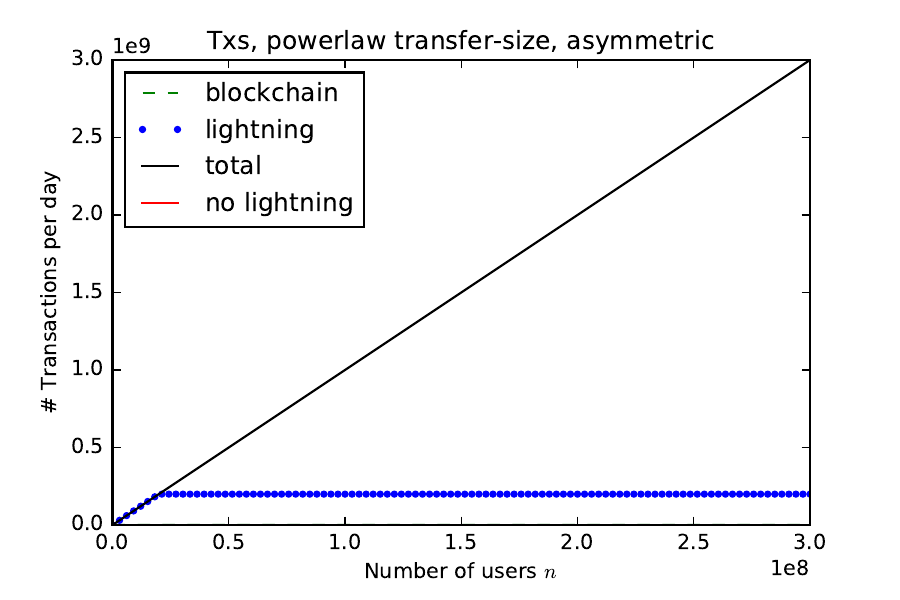}
}
\\
\fi
\if\symmetric1
\fbox{
\includegraphics[scale=0.5]{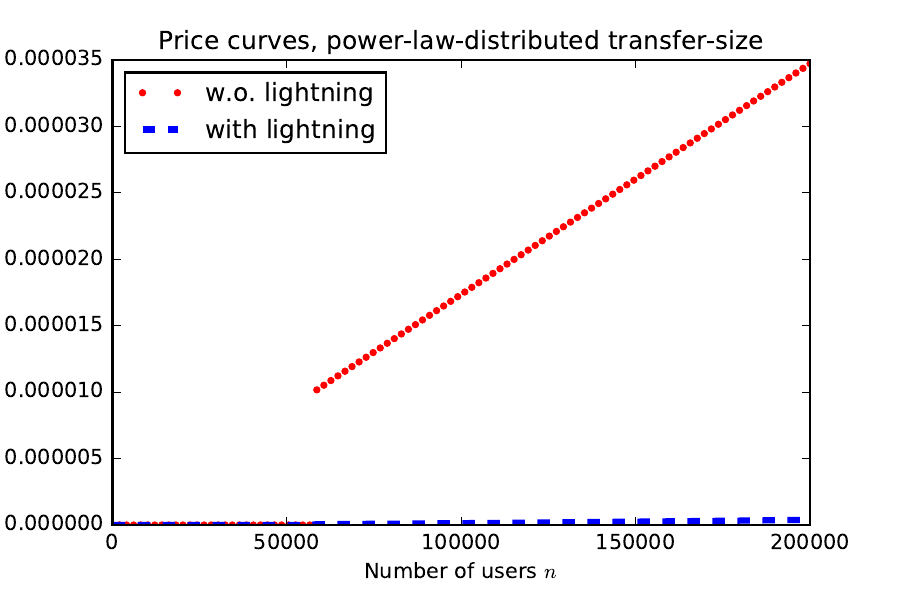}
~~
\includegraphics[scale=0.5]{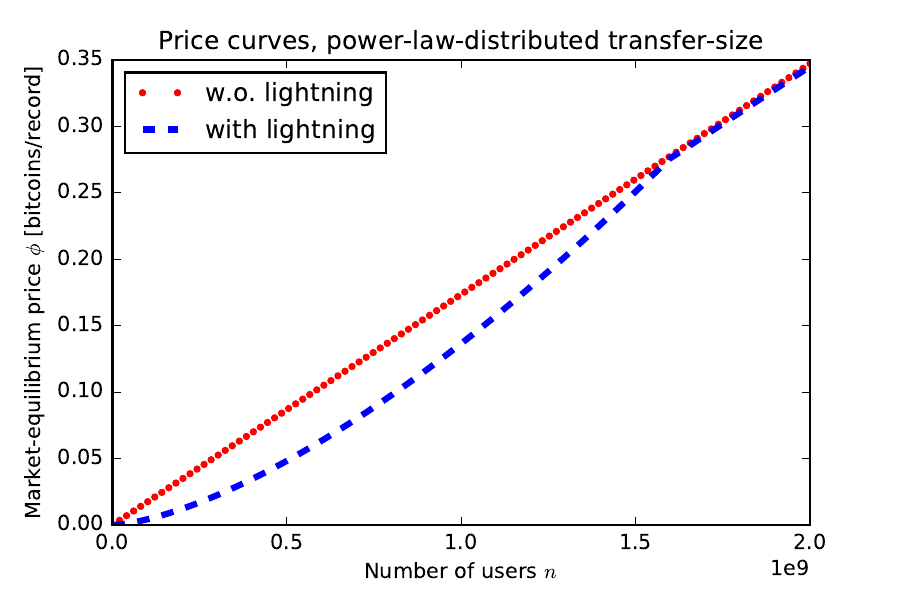}
}
\\
\fbox{
\includegraphics[scale=0.5]{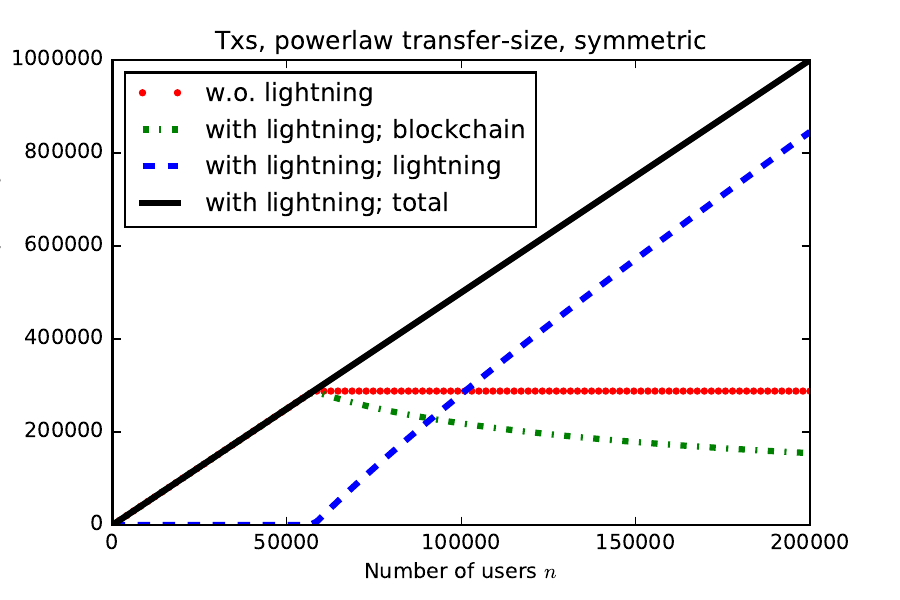}
~~
\includegraphics[scale=0.5]{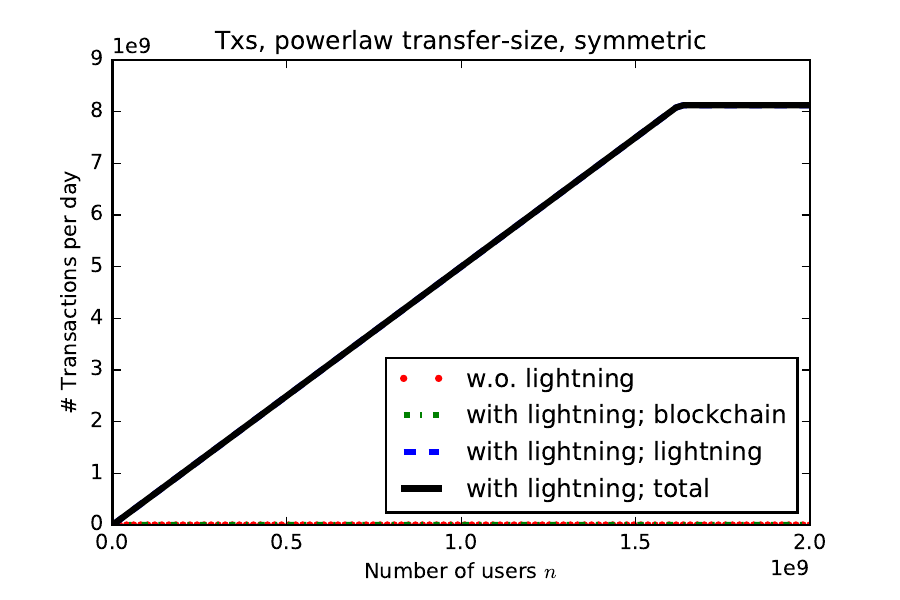}
}
\fi
\end{center}
\caption{
\label{fig:price-curves-powerlaw}
Price curves when $z$ has power-law distribution. 
\\
\textbf{Top:} Equilibrium price in a world with and without lightning (two different scales)
\\
\textbf{Bottom:} How the transfers split between lightning and blockchain  (two different scales).
}
\end{figure}

\end{document}